\documentclass[letterpaper, 10 pt, conference]{ieeeconf}

\usepackage{cite}
\usepackage{amsmath,amssymb,amsfonts}
\usepackage{bm}
\usepackage{algorithmic}
\usepackage{graphicx}
\usepackage{textcomp}
\usepackage{xcolor,color}
\usepackage[ruled,vlined]{algorithm2e}
\usepackage{hyperref}
\usepackage{tikz}
\usetikzlibrary{positioning}
    
\makeatletter
\let\MYcaption\@makecaption
\makeatother 

\usepackage[font=footnotesize]{subcaption}

\makeatletter
\let\@makecaption\MYcaption
\makeatother

    
\newtheorem{definition}{Definition}

\newtheorem{theorem}{Theorem}

\newtheorem{remark}{Remark}





\IEEEoverridecommandlockouts
\overrideIEEEmargins

\begin{document}

\title{\bf Robust Approximate Simulation for Hierarchical Control of Piecewise \\ Affine Systems under Bounded Disturbances}

\author{Zihao Song, Vince Kurtz, Shirantha Welikala, Panos J. Antsaklis and Hai Lin\thanks{This work was supported by the National Science Foundation under Grant IIS1724070, Grant CNS-1830335, and Grant IIS-2007949. The authors are with the Department of Electrical Engineering, University of Notre Dame, Notre Dame, IN 46556 USA (e-mail: zsong2@nd.edu; vjkurtz@gmail.com; wwelikal@nd.edu; hlin1@nd.edu; pantsakl@nd.edu). The first author sincerely appreciates Dr. P. M. Wensing for several fruitful discussions.}}
\maketitle
\thispagestyle{empty}

\begin{abstract}
    Piecewise affine (PWA) systems are widely applied in many practical cases such as the control of nonlinear systems and hybrid dynamics. However, most of the existing PWA control methods have poor scalability with respect to the number of modes and system dimensions and may not be robust to the disturbances in performance. In this paper, we present a robust approximate simulation based control method for PWA systems under bounded external disturbances. First, a lower-dimensional linear system (abstraction) and an associated \emph{interface} are designed to enable the output of the PWA system (concrete system) to track the output of the abstraction. Then, a Lyapunov-like \emph{simulation function} is designed to show the boundedness of the output errors between the two systems. Furthermore, the results obtained for linear abstraction are extended to the case that a simpler PWA system is the abstraction. To illustrate the effectiveness of the proposed approach, simulation results are provided for two design examples.
\end{abstract}

\section{Introduction}\label{sec:intro}

Piecewise Affine (PWA) systems are an important and powerful modeling tool that can globally approximate nonlinear systems with finite number of linear characteristics on partitions \cite{Rodrigues2001ACDPWA}. In this case, controllers can be designed based on linear system control theories and better accommodate the existing nonlinearities. Because of this advantage, PWA systems are applicable in many engineering fields, and thus have attracted a plenty of attention since it was first proposed. Moreover, PWA systems are particularly used in situations where a plant operates in different modes or under physical constraints such as in robot locomotion \cite{Han2017FDMCPR}, manipulation with contacts \cite{Hogan2018RPM} and hybrid control systems \cite{Habets2006RCSPWAHS}. 

In recent years, together with Mixed Logical Dynamical (MLD) systems, PWA systems are the most popular modeling framework for hybrid systems in Model Predictive Control (MPC) \cite{CAMACHO201021MPCHS}. However, such on-line optimal control methods typically lead to the synthesis of very inefficient Mixed-Integer Convex Programs (MICPs) with large computational price \cite{Marcucci2019MIFOCPWA}.  To reduce the complexity of computing the convex polyhedra for PWA systems, an optimal sampling-based controller is proposed in \cite{sadraddini2018arXiv}, which can be cast into a single MICP problem. Nevertheless, for general PWA systems, the aforementioned on-line control strategies deteriorate with the number of partitions and the dimension of the systems. Similarly, the existing off-line controller design approaches for PWA (e.g., the LMI based controller synthesis approaches \cite{Hassibi1998QSCPWA}) scale poorly for high dimensional PWA systems. Therefore, the controller synthesis for high-dimensional PWA systems with a great number of partitions is still an open problem.

Approximate simulation is an extension of simulation relations from formal methods to continuous systems. It is a powerful tool often used for hierarchical control of complex and high-dimensional systems \cite{Girard2006HCAS,GIRARD2009HCSD,Vince2020RASHC}. This approximate simulation framework defines an approximate relationship between two systems, i.e. the full-order (concrete) system and the reduced-order system (abstraction). By constructing an abstraction and a corresponding \emph{interface} between this abstraction and the concrete system, the outputs of both systems can be guaranteed to remain close within some certain error bound characterized by a Lyapunov-like \emph{simulation function}. Different from the asymptotic model matching \cite{Di1994AMMNS} that enforces output global asymptotic stability of the involved systems, it is not required in approximate simulation framework that the trajectories of the system and its abstraction match exactly but only approximately. This relaxation allows us to consider simpler abstractions and thus simplify the design of high-level control tasks. 

Motivated by the above observations, we extend the results in \cite{Vince2020RASHC} to consider the case where the concrete system is a PWA system. Our primary contributions can be summarized as follows:
\begin{enumerate}
    \item A novel robust approximate simulation based control strategy is proposed for the control of PWA systems using a linear system as the abstraction;
    \item The results obtained in the linear system abstraction approach is generalized for the case that the abstraction is a simpler PWA system.
\end{enumerate}

The remainder of this paper is organized as follows. The problem formulation and some necessary preliminaries are presented in Section \ref{sec:background}. Our main results are presented in Section \ref{sec:main_results}, and are supported by simulation examples in Section \ref{sec:simulation}. Finally, concluding remarks are provided in Section \ref{sec:conclusion}.

\section{Background}\label{sec:background}

\subsection{Problem Formulation}

Consider a piecewise-affine (PWA) system as follows
\begin{equation}\label{eq:PWA_concrete}
    \Sigma:\left\{\begin{array}{ll}
    \mathbf{\dot{x}}_1=\mathbf{A}_i\mathbf{x}_1+\mathbf{B}_i\mathbf{u}_1+\mathbf{c}_i &  \\
    \mathbf{y}_1=\mathbf{C}_i\mathbf{x}_1 &    
    \end{array}\right.
\end{equation}
where the system state \(\mathbf{x}_1\in \mathcal{X}_1^{i}:=\{\mathbf{x}_1\in\mathbb{R}^n|\ \mathbf{E}_i\mathbf{x}_1\geq\mathbf{f}_i\}\), and \(\mathbf{E}_i\in\mathbb{R}^{b\times n}\), \(\mathbf{f}_i\in\mathbb{R}^b\), \(\mathbf{A}_i\in\mathbb{R}^{n\times n}\), \(\mathbf{B}_i\in\mathbb{R}^{n\times p}\) and \(\mathbf{C}_i\in\mathbb{R}^{k\times n}\) are given, for \(i\in\mathcal{I}:=\{1,...,s\}\). The system output is \(\mathbf{y}_1\in\mathbb{R}^k\). The term \(\mathbf{c}_i\in\mathbb{R}^{n}\), which represents the lumped bounded external disturbances and piecewise linearization error, is assumed to be bounded such that \(||\mathbf{c}_i||_{\infty}\leq \bar{c}_i\), where \(\bar{c}_i\) is known. The cell bounding \cite{johansson2003piecewise} is defined as \(\mathbf{\bar{E}}_i=\begin{bmatrix}
    \mathbf{E}_i & -\mathbf{f}_i
\end{bmatrix}\) for the partitions with some of the boundaries not crossing the origin (denoted as \(\mathcal{I}_1\)), with \(\mathbf{\bar{E}}_i\begin{bmatrix}
    \mathbf{x}_1^T & 1 
\end{bmatrix}^T\geq \mathbf{0}\) and it reduces to \(\mathbf{E}_i\) for the partitions with all their boundaries crossing the origin (denoted as \(\mathcal{I}_0\)), with \(\mathbf{E}_i\mathbf{x}_1 \geq \mathbf{0}\). Besides, the continuity matrix \cite{johansson2003piecewise} is defined as \(\mathbf{\bar{J}}_i=\begin{bmatrix}
    \mathbf{J}_i & h_i
\end{bmatrix}\) for \(i\in\mathcal{I}_1\) (and \(\mathbf{J}_i\) for \(i\in\mathcal{I}_0\)), with \(\mathbf{\bar{J}}_{i_1}\begin{bmatrix}
    \mathbf{x}_1^T & 1 
\end{bmatrix}^T=\mathbf{\bar{J}}_{i_2}\begin{bmatrix}
    \mathbf{x}_1^T & 1 
\end{bmatrix}^T\) for \(\mathcal{X}_1^{i_1}\cap\mathcal{X}_1^{i_2}\), \(i_1\), \(i_2\in \mathcal{I}\).

In this paper, we first consider the abstraction as a linear system of the following form
\begin{equation}\label{eq:linear_abstraction}
    \Sigma':\left \{\begin{array}{ll}
    \mathbf{\dot{x}}_2=\mathbf{F}\mathbf{x}_2+\mathbf{G}\mathbf{u}_2 &  \\
    \mathbf{y}_2=\mathbf{H}\mathbf{x}_2 &    
    \end{array}\right.
\end{equation}
where $\mathbf{x}_2 \in \mathbb{R}^m$ is the state of system (\ref{eq:linear_abstraction}), $\mathbf{y}_2 \in \mathbb{R}^k$ is the system output. The matrices \(\mathbf{F}\in\mathbb{R}^{m\times m}\), \(\mathbf{G}\in\mathbb{R}^{m\times q}\) and \(\mathbf{H}\in\mathbb{R}^{k\times m}\) are free to select. The abstraction (\ref{eq:linear_abstraction}) is typically simpler than each mode of the PWA system \(\Sigma\) in terms of system dimension, i.e., \(m\leq n\). 

Then, if a single linear abstraction may not be completely adequate (typically if the concrete PWA system \(\Sigma\) has many modes), the abstraction can be selected as a simpler PWA system of the form
\begin{equation}\label{eq:PWA_abstraction}
    \Sigma'':\left \{\begin{array}{ll}
    \mathbf{\dot{x}}_2=\mathbf{F}_j\mathbf{x}_2+\mathbf{G}_j\mathbf{u}_2 &  \\
    \mathbf{y}_2=\mathbf{H}_j\mathbf{x}_2 &    
    \end{array}\right.
\end{equation}
where \(\mathbf{x}_2\in \mathcal{X}_2^{j}:=\{\mathbf{x}_2\in\mathbb{R}^m|\ \mathbf{E}_{aj}\mathbf{x}_2\geq\mathbf{f}_{aj}\}\), \(\mathbf{E}_{aj}\in\mathbb{R}^{d\times m}\), \(\mathbf{f}_{aj}\in\mathbb{R}^d\), $\mathbf{y}_2 \in \mathbb{R}^k$, for \(j\in\mathcal{I}_a:=\{1,...,r\}\) with \(r\leq s\). The matrices \(\mathbf{F}_j\in\mathbb{R}^{m\times m}\), \(\mathbf{G}_j\in\mathbb{R}^{m\times q}\) and \(\mathbf{H}_j\in\mathbb{R}^{k\times m}\) are free to select, for \(j\in\mathcal{I}_a\). Let us denote the partitions with all their boundaries crossing the origin as \(\mathcal{I}_{a0}\) and the partitions with some of the boundaries not crossing the origin as \(\mathcal{I}_{a1}\). It is important to note that the abstraction \(\Sigma''\) has less modes and much simpler dynamics than the concrete system \(\Sigma\). The details of the partitions of the abstract PWA system will be studied in Section III.

Consider the abstraction in (\ref{eq:linear_abstraction}). Define \(\mathbf{\tilde{x}}=\mathbf{x}_1-\mathbf{P}_i\mathbf{x}_2\), where \(\mathbf{P}_i\in\mathbb{R}^{n\times m}\) is an injective map from the state space of \(\mathbf{x}_2\) to that of \(\mathbf{x}_1\) for \(i\in\mathcal{I}\). Then,
\begin{align*}
    \mathbf{\dot{\tilde{x}}}&=\mathbf{\dot{x}}_1-\mathbf{P}_i\mathbf{\dot{x}}_2\\
    &=(\mathbf{A}_i\mathbf{x}_1+\mathbf{B}_i\mathbf{u}_1+\mathbf{c}_i)-\mathbf{P}_i(\mathbf{F}\mathbf{x}_2+\mathbf{G}\mathbf{u}_2)\\
    &=\mathbf{A}_i\mathbf{\tilde{x}}+\mathbf{B}_i\mathbf{u}_v+\mathbf{A}_i\mathbf{P}_i\mathbf{x}_2+\mathbf{c}_i-\mathbf{P}_i(\mathbf{F}\mathbf{x}_2+\mathbf{G}\mathbf{u}_2)
\end{align*}
where \(\mathbf{u}_v:=\mathbf{u}_1(\mathbf{\tilde{x}},\mathbf{x}_2,\mathbf{\bar{u}}_2)\) is the control input of the concrete system, which is also known as the \emph{interface} that we need to design. For the abstraction (\ref{eq:linear_abstraction}), an input transformation law is designed as
\begin{equation}\label{eq:abstraction_ioft}
    \mathbf{u}_2=\mathbf{L}\mathbf{x}_2+\mathbf{\bar{u}}_2
\end{equation}
where the matrix \(\mathbf{L}\in\mathbb{R}^{q\times m}\) is selected such that all the eigenvalues of the matrix \(\mathbf{F}+\mathbf{G}\mathbf{L}\) of the transformed abstraction \(\Sigma'\) have negative real parts, and \(\mathbf{\bar{u}}_2\in\mathbb{R}^q\) is the transformed input. Thus, we define our \emph{robust approximate simulation framework} as the following joint system 
\begin{equation}\label{eq:entire_system}
    \left \{\begin{array}{ll}
    \mathbf{\dot{\tilde{x}}}=\mathbf{A}_i\mathbf{\tilde{x}}+\mathbf{B}_i\mathbf{u}_v-[\mathbf{B}_i\mathbf{Q}_i+\mathbf{P}_i\mathbf{G}\mathbf{L}]\mathbf{x}_2-\mathbf{P}_i\mathbf{G}\mathbf{\bar{u}}_2+\mathbf{c}_i &  \\
    \mathbf{\dot{x}}_2=(\mathbf{F}+\mathbf{G}\mathbf{L})\mathbf{x}_2+\mathbf{G}\mathbf{\bar{u}}_2 &\\
    \mathbf{e}=\mathbf{y}_1-\mathbf{y}_2=\mathbf{C}_i\mathbf{\tilde{x}} &    
    \end{array}\right.
\end{equation}
for \(i\in\mathcal{I}\), where \(\mathbf{P}_i\in\mathbb{R}^{n\times m}\) and \(\mathbf{Q}_i\in\mathbb{R}^{n\times m}\) satisfy the conditions:
\begin{equation}\label{approximate_relation_i}
    \mathbf{H}=\mathbf{C}_i\mathbf{P}_i,\ \mathbf{P}_i\mathbf{F}=\mathbf{A}_i\mathbf{P}_i+\mathbf{B}_i\mathbf{Q}_i
\end{equation}
Define the state of the joint system as \(\boldsymbol{\omega}=\begin{bmatrix}
    \mathbf{\tilde{x}}^T & \mathbf{x}_2^T \\
\end{bmatrix}^T\). Then, the joint partitions can be written as
    \begin{equation}\label{eq:entire_system_partition}
        \begin{split}
        \boldsymbol{\omega}\in \mathbf{\Omega}_i:&=\{\mathbf{E}_i(\mathbf{x}_1-\mathbf{P}_i\mathbf{x}_2)+\mathbf{E}_i\mathbf{P}_i\mathbf{x}_2\geq\mathbf{f}_i\}\\
        &=\{[\mathbf{E}_i\ \mathbf{E}_i\mathbf{P}_i]\boldsymbol{\omega}\geq\mathbf{f}_i\}
        \end{split}
    \end{equation}
Note that the joint partition (\ref{eq:entire_system_partition}) can be equivalently represented by:    
    \begin{equation}\label{eq:entire_system_partition_1}
        \begin{split}
        \boldsymbol{\bar{\omega}}\in \mathbf{\bar{\Omega}}_i:&=\{\begin{bmatrix}
        \mathbf{E}'_i & \mathbf{f}'_i
        \end{bmatrix}\begin{bmatrix}
        \boldsymbol{\omega} \\
        1
       \end{bmatrix}\geq \mathbf{0}\}=\{\mathbf{\bar{E}}_i\boldsymbol{\bar{\omega}}\geq \mathbf{0}\}
        \end{split}
    \end{equation}
where \(\mathbf{E}'_i=\begin{bmatrix}
    \mathbf{E}_i & \mathbf{E}_i\mathbf{P}_i
\end{bmatrix}\) and \(\mathbf{f}'_i=-\mathbf{f}_i\). 

The joint system (\ref{eq:entire_system}) corresponding to the PWA system abstraction (\ref{eq:PWA_abstraction}) is similar  to that of the linear system abstraction (\ref{eq:linear_abstraction}). In particular, it will result in a joint system of the form (\ref{eq:entire_system}) for each tuple (\(\mathbf{F}_j\), \(\mathbf{G}_j\), \(\mathbf{L}_j\)), for \(j\in\mathcal{I}_a\), and the conditions in (\ref{approximate_relation_i}) become
\begin{equation}\label{approximate_relation_ij}
    \mathbf{H}_j=\mathbf{C}_i\mathbf{P}_i,\ \mathbf{P}_i\mathbf{F}_j=\mathbf{A}_i\mathbf{P}_i+\mathbf{B}_i\mathbf{Q}_i
\end{equation}
for \(i\in\mathcal{I}\). In other words, for each \(i\in \mathcal{I}\) we need to determine matrices \(\mathbf{P}_i\) and \(\mathbf{Q}_i\) such that (\ref{approximate_relation_ij}) holds true for a pair (\(\mathbf{F}_j\), \(\mathbf{H}_j\)), \(j\in \mathcal{I}_a\). The details of the joint partitions corresponding to this case will be discussed in Section III.

\begin{remark}
    The input transformation law (\ref{eq:abstraction_ioft}) is introduced to increase the tunability of the matrix \(\mathbf{F}\) (or \(\mathbf{F}_j\) for the PWA abstraction (\ref{eq:PWA_abstraction})). Besides, it will be shown in Section III that some LMI based conditions should be satisfied with all the eigenvalues of (\(\mathbf{F}+\mathbf{G}\mathbf{L}\)) (or (\(\mathbf{F}_j+\mathbf{G}_j\mathbf{L}_j\)) for the PWA abstraction (\ref{eq:PWA_abstraction})) having negative real parts.
\end{remark}

The objective of this paper is to design an interface \(\mathbf{u}_v:=\mathbf{u}_1(\mathbf{\tilde{x}},\mathbf{x}_2,\mathbf{\bar{u}}_2)\) for the joint system (\ref{eq:entire_system}) over the joint partitions (e.g.,(\ref{eq:entire_system_partition})) to guarantee the boundedness of output error between \(\mathbf{y}_1\) and \(\mathbf{y}_2\), i.e., \(||\mathbf{e}||=||\mathbf{y}_1-\mathbf{y}_2||\leq \delta\), where \(\delta\in\mathbb{R}_{\geq 0}\) is some certain error bound. The basic control architecture proposed in this paper is shown in Figure \(\ref{fig:control_arch}\).

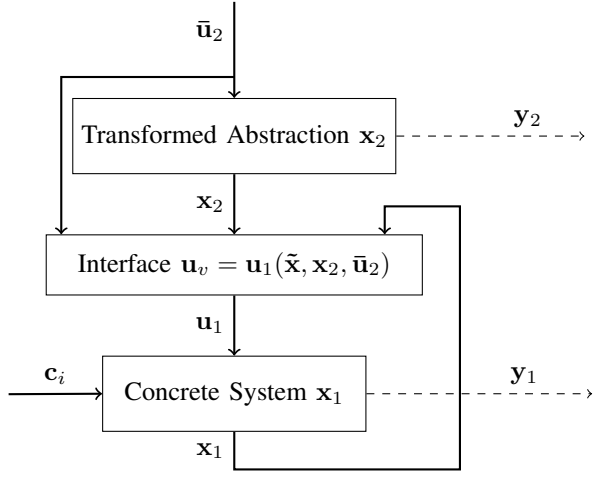
\begin{figure}
    \centering
    \begin{tikzpicture}
        \node[draw, rectangle, minimum width=5cm, minimum height=0.8cm] (In) at (0,0) {Interface \(\mathbf{u}_v=\mathbf{u}_1(\mathbf{\tilde{x}},\mathbf{x}_2,\mathbf{\bar{u}}_2)\)};
        \node[draw, rectangle, below=0.8cm of In, minimum width=3.5cm,  minimum height=1cm] (CS) {Concrete System \(\mathbf{x}_1\)};
        \node[draw, rectangle, above=0.8cm of In, minimum width=3.5cm, minimum height=1cm] (AS) {Transformed Abstraction \(\mathbf{x}_2\)};
    
        \draw[->,thick] ++(0,3.5) to node[pos=0.3,left] {$\mathbf{\bar{u}}_2$} (AS.north) ;
        \draw[->,thick] ++(0,2.5) -| ([xshift=-2.3cm] In.north) ;
        \draw[->,thick] (AS.south) to node[pos=0.5,left] {$\mathbf{x}_2$} (In.north) ;
        \draw[->,thick] (In.south) to node[pos=0.5,left] {$\mathbf{u}_1$}  (CS.north) ;
        \draw[->,thick] (CS.south) to node[pos=0.5,left] {$\mathbf{x}_1$} ++(0,-0.5)-- ++(3,0)-- ++(0,3.5)-- ++(-1,0) --([xshift=2cm] In.north) ;
        \draw[->,thick] ++(-3,-1.72) to node[pos=0.5,above] {$\mathbf{c}_i$} (CS.west) ;
        \draw[->,,dashed] (AS.east) to node[pos=0.7,above] {$\mathbf{y}_2$} ++(2.5,0)  ;
        \draw[->,,dashed] (CS.east) to node[pos=0.7,above] {$\mathbf{y}_1$} ++(3,0)  ;
    \end{tikzpicture}
    \centering
    \caption{Hierarchical control system architecture considered in this work. We extend the robust approximate simulation \cite{Vince2020RASHC} to consider the case where the concrete system is a PWA system.} 
    \label{fig:control_arch}
\end{figure}


\subsection{Robust Approximate Simulation}\label{subsec:RAS_for_PWA}

\begin{definition} (Approximate Simulation \cite{Girard2006HCAS})\label{def:definition_1}
    A relation \(\mathcal{R}\subseteq\mathbb{R}^m\times \mathbb{R}^n\) is an approximate simulation relation of precision \(\delta\) between \(\Sigma\) and \(\Sigma'\) if for all \((\mathbf{x}_{20},\ \mathbf{x}_{10})\in \mathcal{R}\),
    
\(1)\) \(||\mathbf{y}_{10}-\mathbf{y}_{20}||\leq\delta\), where \(\mathbf{y}_{10}\) and \(\mathbf{y}_{20}\) are the initial values of \(\mathbf{y}_1\) and \(\mathbf{y}_2\) at \(\mathbf{x}_{10}\) and \(\mathbf{x}_{20}\), respectively;
    
    \(2)\) For any state trajectory \(\mathbf{x}_2(t)\) of \(\Sigma'\) such that \(\mathbf{x}_2(0) = \mathbf{x}_{20}\), there exists a state trajectory \(\mathbf{x}_1(t)\) of \(\Sigma\) such that \(\mathbf{x}_1(0) = \mathbf{x}_{10}\) and \((\mathbf{x}_2(t), \mathbf{x}_1(t))\in \mathcal{R}\), for all \(t \geq 0\). 
\end{definition}
\begin{definition} (Robust Approximate Simulation \cite{Vince2021ASTBWBC})\label{def:definition_2}
    The relation \(\mathcal{R}\) is a robust approximate simulation relation if the approximate simulation relation still hold for any disturbances \(\mathbf{c}_i\) in some set \(\mathcal{D}\subseteq \mathbb{R}^n\).
\end{definition}

\section{Main Results}\label{sec:main_results}

In this section, we first analyze the case where the linear system is the abstraction. Then, we extend the established theoretical results to the case where the PWA system is the abstraction. In each case, we first design an \emph{interface} ($\mathbf{u}_v$ in (\ref{eq:entire_system})) that enable the concrete system to track the abstraction, and then design a Lyapunov-like \emph{simulation function} to characterize the formal output tracking errors between the concrete system and the abstraction.

\subsection{Linear Abstraction for a PWA System Under Disturbances}

In this part, we first design an interface for the joint system (\ref{eq:entire_system}) for the linear abstraction in the form (\ref{eq:linear_abstraction}), and then a Lyapunov-like simulation function is presented to guarantee a formal error bound for the output tracking errors.

\paragraph*{\textbf{Interface}} 
Considering the joint system (\ref{eq:entire_system}) under the linear abstraction (\ref{eq:linear_abstraction}), we propose to design the associated interface as
    \begin{equation}\label{eq:In}
        \mathbf{u}_v(\mathbf{\tilde{x}},\mathbf{x}_2,\mathbf{\bar{u}}_2)=\mathbf{R}_i\mathbf{\bar{u}}_2+(\mathbf{Q}_i+\mathbf{R}_i\mathbf{L})\mathbf{x}_2+\mathbf{K}_i\mathbf{\tilde{x}}
    \end{equation}
for \(i\in\mathcal{I}\), where \(\mathbf{R}_i\in\mathbb{R}^{p\times q}\) is an arbitrary matrix and \(\mathbf{K}_i\in\mathbb{R}^{p\times n}\) is a tunable parameter matrix. Then, note that the joint system (\ref{eq:entire_system}) under the interface (\ref{eq:In}) can be written as
    \begin{equation}\label{eq:entire_under_In}
        \left \{\begin{array}{ll}
        \boldsymbol{\dot{\mathbf{\bar{\omega}}}}=\mathbf{\bar{A}}_i\boldsymbol{\bar{\omega}}+\mathbf{\bar{B}}_{1i}\mathbf{x}_2+\mathbf{\bar{B}}_{2i}\mathbf{\bar{u}}_2+\mathbf{\bar{c}}_i &  \\
        \mathbf{e}=\mathbf{\bar{C}}_i\boldsymbol{\bar{\omega}} &   
        \end{array}\right.
    \end{equation}
where the state \(\boldsymbol{\bar{\omega}}=\begin{bmatrix}
    \boldsymbol{\omega}^T & 1 \\
\end{bmatrix}^T\), \(\mathbf{\bar{A}}_i=\begin{bmatrix}
    \mathbf{A}'_i & \mathbf{0} \\
    \mathbf{0} & 0
\end{bmatrix}\), \(\mathbf{A}'_i=\begin{bmatrix}
    \mathbf{A}_i+\mathbf{B}_i\mathbf{K}_i & \mathbf{0} \\
    \mathbf{0} & \mathbf{F}+\mathbf{G}\mathbf{L}
\end{bmatrix}\), \(\mathbf{\bar{B}}_{1i}=\begin{bmatrix}
    \mathbf{B}'_{1i} \\
    \mathbf{0}
\end{bmatrix}\), \(\mathbf{B}'_{1i}=\begin{bmatrix}
    (\mathbf{B}_i\mathbf{R}-\mathbf{P}_i\mathbf{G})\mathbf{L} \\
    \mathbf{0}
\end{bmatrix}\), \(\mathbf{\bar{B}}_{2i}=\begin{bmatrix}
    \mathbf{B}'_{2i} \\
    \mathbf{0}
\end{bmatrix}\), \(\mathbf{B}'_{2i}=\begin{bmatrix}
    \mathbf{B}_i\mathbf{R}-\mathbf{P}_i\mathbf{G} \\
    \mathbf{G}
\end{bmatrix}\), \(\mathbf{\bar{c}}_i=\begin{bmatrix}
    \mathbf{c'}_i^T & 0 
\end{bmatrix}^T\), \(\mathbf{c}'_i=\begin{bmatrix}
    \mathbf{c}_i \\
    \mathbf{0}
\end{bmatrix}\), \(\mathbf{\bar{C}}_i=\begin{bmatrix}
    \mathbf{C}'_i & 0 
\end{bmatrix}\), \(\mathbf{C}'_i=\begin{bmatrix}
    \mathbf{C}_i & \mathbf{0}
\end{bmatrix}\), and the joint partition is defined as (\ref{eq:entire_system_partition_1}).

\paragraph*{\textbf{Simulation Function}} 
In order to guarantee a formal error bound, we propose a Lyapunov-like simulation function as
    \begin{equation}\label{eq:SF_0}
        \mathcal{V}(\boldsymbol{\omega}):=\frac{1}{\kappa}\sqrt{\boldsymbol{\omega}^T\mathbf{M}_i\boldsymbol{\omega}},\ \mbox{for}\ \boldsymbol{\omega}\in \mathbf{\Omega}_i,\ i\in \mathcal{I}_0
    \end{equation}
    \begin{equation}\label{eq:SF_1}
        \mathcal{V}(\boldsymbol{\bar{\omega}}):=\frac{1}{\kappa}\sqrt{\boldsymbol{\bar{\omega}}^T\mathbf{\bar{M}}_i\boldsymbol{\bar{\omega}}},\ \mbox{for}\ \boldsymbol{\bar{\omega}}\in \mathbf{\bar{\Omega}}_i,\ i\in \mathcal{I}_1
    \end{equation}
where \(\kappa>0\) is some adjustable parameter, each \(\mathbf{\bar{M}}_i=\mathbf{\bar{J}}_i^T\mathbf{T}\mathbf{\bar{J}}_i\) is a diagonal block matrix, i.e., \(\mathbf{\bar{M}}_i=\begin{bmatrix}
    \mathbf{M}_i & \mathbf{0} \\
    \mathbf{0} & m_i
\end{bmatrix}\) with \(\mathbf{M}_i\in\mathbb{R}^{(n+m)\times(n+m)}\) being positive definite and \(m_i>0\), \(\mathbf{\bar{J}}_i\) is the continuity matrix of the joint system (\ref{eq:entire_system}), and \(\mathbf{T}\) is some symmetric free parameter matrix, and \(\mathbf{\bar{M}}_i\) should satisfy the following linear matrix inequalities: 
    \begin{equation}\label{eq:LMI_M}
        \mathbf{\bar{M}}_i-\mathbf{\bar{C}}_i^T\mathbf{\bar{C}}_i\geq 0,\ \mathbf{\bar{M}}_i-\mathbf{\bar{E}}_i^T\mathbf{U}_i\mathbf{\bar{E}}_i>0
    \end{equation}
    \begin{equation}\label{eq:LMI_stable}
        \mathbf{\bar{A}}_i^T\mathbf{\bar{M}}_i+\mathbf{\bar{M}}_i\mathbf{\bar{A}}_i+\mathbf{\bar{E}}_i^T\mathbf{W}_i\mathbf{\bar{E}}_i+\boldsymbol{\bar{\lambda}}\mathbf{\bar{M}}_i\leq 0
    \end{equation}
where \(\boldsymbol{\bar{\lambda}}=\begin{bmatrix}
    \lambda\mathbf{I} & \mathbf{0} \\
    \mathbf{0} & 0
\end{bmatrix}\), \(\lambda>0\) is some parameter free to choose. \(\mathbf{\bar{E}}_i\) is the cell bounding in (\ref{eq:entire_system_partition_1}), and \(\mathbf{U}_i\) and \(\mathbf{W}_i\) are symmetric free parameter matrices with nonnegative entries.\\

\begin{theorem} \label{theorem_1}

\ Assume there exists a matrix \(\mathbf{P}_i\in\mathbb{R}^{n\times m}\) and a matrix \(\mathbf{Q}_i\in\mathbb{R}^{p\times m}\) such that the linear matrix equations in (\ref{approximate_relation_i}) hold for \(i\in \mathcal{I}\). Then, for the associated interface given by (\ref{eq:In}), there exists a robust approximate simulation relation of \(\Sigma\) by \(\Sigma'\), of a precision \(\delta\) as
    \begin{equation}\label{eq:error_bound}
        \delta:= \left \{\begin{array}{ll}
        \ \kappa\mathcal{V}(\boldsymbol{\omega}), &  \ \ \mathcal{V}(\boldsymbol{\omega})\geq b_0,\ i\in\mathcal{I}_0\\\
        \kappa b_0, & \ \ \mathcal{V}(\boldsymbol{\omega})< b_0,\ i\in\mathcal{I}_0\\\
        \kappa\mathcal{V}(\boldsymbol{\bar{\omega}}), &  \ \ \mathcal{V}(\boldsymbol{\bar{\omega}})\geq b_1,\ i\in\mathcal{I}_1\\\
        \kappa b_1, &     \ \ \mathcal{V}(\boldsymbol{\bar{\omega}})< b_1,\ i\in\mathcal{I}_1
        \end{array}\right.
    \end{equation}
where \(b_0=\gamma_1(||\mathbf{\bar{u}}_2||_{\infty})+\gamma_2(||\mathbf{c}'_i||_{\infty})+\gamma_3(||\mathbf{x}_2||_{\infty})\), \(b_1=\gamma_1(||\mathbf{\bar{u}}_2||_{\infty})+\gamma_2(||\mathbf{\bar{c}}_i||_{\infty})+\gamma_3(||\mathbf{x}_2||_{\infty})+\sqrt{m_i}\) with \(\gamma_1(\cdot)\), \(\gamma_2(\cdot)\) and \(\gamma_3(\cdot)\) being some class-$\mathcal{K}$ functions, and \(\mathcal{V}(\boldsymbol{\omega})\) and \(\mathcal{V}(\boldsymbol{\bar{\omega}})\) are Lyapunov-like simulation function as in (\ref{eq:SF_0}) and (\ref{eq:SF_1}), respectively. \\
\end{theorem}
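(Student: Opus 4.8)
The plan is to verify the two conditions of Definition~\ref{def:definition_1} (in the robust sense of Definition~\ref{def:definition_2}) using the piecewise quadratic \(\mathcal{V}\) as a Lyapunov-like certificate, treating \(\mathbf{x}_2\), \(\mathbf{\bar{u}}_2\) and the disturbance \(\mathbf{\bar{c}}_i\) as bounded exogenous inputs to the error dynamics (\ref{eq:entire_under_In}). Condition~1 (closeness of outputs) follows directly from the first inequality in (\ref{eq:LMI_M}): since \(\mathbf{\bar{M}}_i-\mathbf{\bar{C}}_i^T\mathbf{\bar{C}}_i\geq 0\), we get \(\|\mathbf{e}\|^2=\boldsymbol{\bar{\omega}}^T\mathbf{\bar{C}}_i^T\mathbf{\bar{C}}_i\boldsymbol{\bar{\omega}}\leq\boldsymbol{\bar{\omega}}^T\mathbf{\bar{M}}_i\boldsymbol{\bar{\omega}}=\kappa^2\mathcal{V}^2\), so that \(\|\mathbf{e}\|\leq\kappa\mathcal{V}\) holds along the whole trajectory (and in particular at \(t=0\)); this is exactly why the precision in (\ref{eq:error_bound}) is \(\kappa\) times \(\mathcal{V}\) or its ultimate bound. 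The homogeneous case \(i\in\mathcal{I}_0\) uses \(\mathbf{M}_i\) and \(\boldsymbol{\omega}\) while the affine case \(i\in\mathcal{I}_1\) uses the augmented \(\mathbf{\bar{M}}_i\) and \(\boldsymbol{\bar{\omega}}\); the two are treated identically after lifting \(\boldsymbol{\omega}\) to \(\boldsymbol{\bar{\omega}}=[\boldsymbol{\omega}^T\ 1]^T\).

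The core of the argument is condition~2, i.e.\ a decrease/invariance estimate for \(\mathcal{V}\). Along (\ref{eq:entire_under_In}) I would differentiate \(V:=\boldsymbol{\bar{\omega}}^T\mathbf{\bar{M}}_i\boldsymbol{\bar{\omega}}=\kappa^2\mathcal{V}^2\), obtaining
\[
\dot{V}=\boldsymbol{\bar{\omega}}^T(\mathbf{\bar{A}}_i^T\mathbf{\bar{M}}_i+\mathbf{\bar{M}}_i\mathbf{\bar{A}}_i)\boldsymbol{\bar{\omega}}+2\boldsymbol{\bar{\omega}}^T\mathbf{\bar{M}}_i(\mathbf{\bar{B}}_{1i}\mathbf{x}_2+\mathbf{\bar{B}}_{2i}\mathbf{\bar{u}}_2+\mathbf{\bar{c}}_i).
\]
On the partition \(\mathbf{\bar{\Omega}}_i\) the constraint \(\mathbf{\bar{E}}_i\boldsymbol{\bar{\omega}}\geq\mathbf{0}\) with \(\mathbf{W}_i\geq 0\) entrywise gives \(\boldsymbol{\bar{\omega}}^T\mathbf{\bar{E}}_i^T\mathbf{W}_i\mathbf{\bar{E}}_i\boldsymbol{\bar{\omega}}\geq 0\), so (\ref{eq:LMI_stable}) yields the S-procedure bound \(\boldsymbol{\bar{\omega}}^T(\mathbf{\bar{A}}_i^T\mathbf{\bar{M}}_i+\mathbf{\bar{M}}_i\mathbf{\bar{A}}_i)\boldsymbol{\bar{\omega}}\leq-\boldsymbol{\bar{\omega}}^T\boldsymbol{\bar{\lambda}}\mathbf{\bar{M}}_i\boldsymbol{\bar{\omega}}=-\lambda(V-m_i)\), where the block structure of \(\boldsymbol{\bar{\lambda}}\) and \(\mathbf{\bar{M}}_i\) extracts the constant offset \(m_i\). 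Bounding the three cross terms by Cauchy--Schwarz in the \(\mathbf{\bar{M}}_i\)-inner product, \(2\boldsymbol{\bar{\omega}}^T\mathbf{\bar{M}}_i\mathbf{b}\leq 2\sqrt{V}\,\sqrt{\mathbf{b}^T\mathbf{\bar{M}}_i\mathbf{b}}\), turns the contributions of \(\mathbf{x}_2\), \(\mathbf{\bar{u}}_2\) and \(\mathbf{\bar{c}}_i\) into the class-\(\mathcal{K}\) functions \(\gamma_3(\|\mathbf{x}_2\|_\infty)\), \(\gamma_1(\|\mathbf{\bar{u}}_2\|_\infty)\) and \(\gamma_2(\|\mathbf{\bar{c}}_i\|_\infty)\). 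Substituting \(\dot{V}=2\kappa^2\mathcal{V}\dot{\mathcal{V}}\) and solving the resulting quadratic-in-\(\mathcal{V}\) sign condition shows \(\dot{\mathcal{V}}<0\) whenever \(\mathcal{V}\) exceeds \(b_0\) (resp.\ \(b_1\)); the extra \(\sqrt{m_i}\) appearing in \(b_1\) is precisely the contribution of the affine offset \(m_i\), which is absent when \(i\in\mathcal{I}_0\).

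With the decrease established, a standard comparison-lemma argument gives that \(\mathcal{V}\) is nonincreasing while above the threshold and ultimately bounded by it, which combined with \(\|\mathbf{e}\|\leq\kappa\mathcal{V}\) yields exactly the piecewise precision (\ref{eq:error_bound}); robustness in the sense of Definition~\ref{def:definition_2} is automatic because the disturbance bound \(\bar{c}_i\) enters only through the class-\(\mathcal{K}\) term \(\gamma_2\). I expect the main obstacle to be the PWA-specific bookkeeping across mode switches rather than any single estimate: because \(\mathcal{V}\) is only piecewise quadratic, one must verify that it does not jump upward when the joint state crosses a shared boundary \(\mathbf{\Omega}_{i_1}\cap\mathbf{\Omega}_{i_2}\). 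This is secured by the construction \(\mathbf{\bar{M}}_i=\mathbf{\bar{J}}_i^T\mathbf{T}\mathbf{\bar{J}}_i\) with a common \(\mathbf{T}\): the continuity-matrix identity \(\mathbf{\bar{J}}_{i_1}\boldsymbol{\bar{\omega}}=\mathbf{\bar{J}}_{i_2}\boldsymbol{\bar{\omega}}\) on shared boundaries forces \(\boldsymbol{\bar{\omega}}^T\mathbf{\bar{M}}_{i_1}\boldsymbol{\bar{\omega}}=\boldsymbol{\bar{\omega}}^T\mathbf{\bar{M}}_{i_2}\boldsymbol{\bar{\omega}}\), so that \(\mathcal{V}\) is continuous and the per-partition decrease estimates splice into a global one, while the positivity relaxation in the second inequality of (\ref{eq:LMI_M}) guarantees that \(\mathcal{V}\) is a genuine (cellwise positive) certificate. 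Finally, the second condition of Definition~\ref{def:definition_1} is immediate: the interface (\ref{eq:In}) is a state feedback, so every abstraction trajectory \(\mathbf{x}_2(t)\) induces a well-defined closed-loop trajectory of the joint system, and invariance of the sublevel sets of \(\mathcal{V}\) supplies the matching concrete trajectory.
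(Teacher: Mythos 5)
Your proposal is correct and follows essentially the same route as the paper's proof: the output bound from the first inequality in (\ref{eq:LMI_M}), the derivative of the simulation function bounded via the S-procedure term \(\mathbf{\bar{E}}_i^T\mathbf{W}_i\mathbf{\bar{E}}_i\) together with the block structure of \(\boldsymbol{\bar{\lambda}}\mathbf{\bar{M}}_i\) (which produces the \(\sqrt{m_i}\) offset in \(b_1\)), Cauchy--Schwarz bounds on the \(\mathbf{x}_2\), \(\mathbf{\bar{u}}_2\) and \(\mathbf{\bar{c}}_i\) terms yielding the class-\(\mathcal{K}\) functions \(\gamma_1,\gamma_2,\gamma_3\), and forward invariance of the sublevel sets to obtain the piecewise precision (\ref{eq:error_bound}). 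Your only deviations are cosmetic---you differentiate \(\kappa^2\mathcal{V}^2\) and solve the quadratic sign condition where the paper instead uses \(\mathcal{V}(\boldsymbol{\bar{\omega}})\geq\frac{1}{\kappa}\sqrt{m_i}\) to linearize the \(m_i/\mathcal{V}\) term---and your explicit continuity-across-boundaries argument via \(\mathbf{\bar{M}}_i=\mathbf{\bar{J}}_i^T\mathbf{T}\mathbf{\bar{J}}_i\) makes precise a splicing step the paper's proof leaves implicit.
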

\begin{proof}
First, note that the simulation function (\ref{eq:SF_1}) can bound the output error under (\ref{eq:LMI_M}) by
    \begin{equation}\label{error_bound}
        \mathcal{V}(\boldsymbol{\bar{\omega}})=\frac{1}{\kappa}\sqrt{\boldsymbol{\bar{\omega}}^T\mathbf{\bar{M}}_i\boldsymbol{\bar{\omega}}}\geq \frac{1}{\kappa}||\mathbf{\bar{C}}_i\boldsymbol{\bar{\omega}}||=\frac{1}{\kappa}||\mathbf{e}||
    \end{equation}
Similarly, the simulation function (\ref{eq:SF_0}) can bound the output error by the reduced condition \(\mathbf{M}_i-\mathbf{C}_i'^T\mathbf{C}_i'\geq 0\) in (\ref{eq:LMI_M}). 

For the case \(i\in\mathcal{I}_1\), take the directional derivative of the simulation function (\ref{eq:SF_1}) along (\ref{eq:entire_under_In}),
    \begin{align*}
        &\mathcal{\dot{V}}(\boldsymbol{\bar{\omega}})=\frac{\boldsymbol{\bar{\omega}}^T\mathbf{\bar{M}}_i(\mathbf{\bar{A}}_i\boldsymbol{\bar{\omega}}+\mathbf{\bar{B}}_{1i}\mathbf{x}_2+\mathbf{\bar{B}}_{2i}\mathbf{\bar{u}}_2+\mathbf{\bar{c}}_i)}{\kappa\sqrt{\boldsymbol{\bar{\omega}}^T\mathbf{\bar{M}}_i\boldsymbol{\bar{\omega}}}}\\
        &=\frac{\boldsymbol{\bar{\omega}}^T\mathbf{\bar{M}}_i\mathbf{\bar{A}}_i\boldsymbol{\bar{\omega}}+\boldsymbol{\bar{\omega}}^T\mathbf{\bar{M}}_i\mathbf{\bar{B}}_{1i}\mathbf{x}_2+\boldsymbol{\bar{\omega}}^T\mathbf{\bar{M}}_i\mathbf{\bar{B}}_{2i}\mathbf{\bar{u}}_2+\boldsymbol{\bar{\omega}}^T\mathbf{\bar{M}}_i\mathbf{\bar{c}}_i}{\kappa\sqrt{\boldsymbol{\bar{\omega}}^T\mathbf{\bar{M}}_i\boldsymbol{\bar{\omega}}}}
    \end{align*}
Based on (\ref{eq:LMI_stable}), $\mathbf{\bar{M}}_i\mathbf{\bar{A}}_i \leq -\frac{1}{2}\mathbf{\bar{E}}_i^T\mathbf{W}_i\mathbf{\bar{E}}_i
-\frac{1}{2}\boldsymbol{\bar{\lambda}}\mathbf{\bar{M}}_i$, and thus, 
    \begin{align*}
        \mathcal{\dot{V}}(\boldsymbol{\bar{\omega}})\leq& \frac{\boldsymbol{\bar{\omega}}^T\left[-\frac{1}{2}\mathbf{\bar{E}}_i^T\mathbf{W}_i\mathbf{\bar{E}}_i-\frac{1}{2}\boldsymbol{\bar{\lambda}}\mathbf{\bar{M}}_i\right]\boldsymbol{\bar{\omega}}}{\kappa\sqrt{\boldsymbol{\bar{\omega}}^T\mathbf{\bar{M}}_i\boldsymbol{\bar{\omega}}}}+\frac{1}{\kappa}||\sqrt{\mathbf{\bar{M}}_i}\mathbf{\bar{B}}_{1i}||||\mathbf{x}_2||\\
        &+\frac{1}{\kappa}||\sqrt{\mathbf{\bar{M}}_i}\mathbf{\bar{B}}_{2i}||||\mathbf{\bar{u}}_2||+\frac{1}{\kappa}||\sqrt{\mathbf{\bar{M}}_i}||||\mathbf{\bar{c}}_i||
    \end{align*}
Then, by the definition of the simulation function (\ref{eq:SF_1}) and the parameter matrix \(\boldsymbol{\bar{\lambda}}\), we have
    \begin{align*}
        \mathcal{\dot{V}}(\boldsymbol{\bar{\omega}})\leq& -\frac{\lambda}{2\kappa}\mathcal{V}(\boldsymbol{\bar{\omega}})+\frac{\lambda}{2\kappa^2}\frac{m_i}{\mathcal{V}(\boldsymbol{\bar{\omega}})}+\frac{1}{\kappa}||\sqrt{\mathbf{\bar{M}}_i}\mathbf{\bar{B}}_{1i}||||\mathbf{x}_2||\\
        &+\frac{1}{\kappa}||\sqrt{\mathbf{\bar{M}}_i}\mathbf{\bar{B}}_{2i}||||\mathbf{\bar{u}}_2||+\frac{1}{\kappa}||\sqrt{\mathbf{\bar{M}}_i}||||\mathbf{\bar{c}}_i||
    \end{align*}
According to (\ref{eq:SF_1}), we know that \(\mathcal{V}(\boldsymbol{\bar{\omega}})\geq \frac{1}{\kappa}\sqrt{m_i}\), and therefore we have
    \begin{align*}
        \mathcal{\dot{V}}(\boldsymbol{\bar{\omega}})\leq& -\frac{\lambda}{2\kappa}\mathcal{V}(\boldsymbol{\bar{\omega}})+\frac{\lambda}{2\kappa}\sqrt{m_i}+\frac{1}{\kappa}||\sqrt{\mathbf{\bar{M}}_i}\mathbf{\bar{B}}_{1i}||||\mathbf{x}_2||\\
        &+\frac{1}{\kappa}||\sqrt{\mathbf{\bar{M}}_i}\mathbf{\bar{B}}_{2i}||||\mathbf{\bar{u}}_2||+\frac{1}{\kappa}||\sqrt{\mathbf{\bar{M}}_i}||||\mathbf{\bar{c}}_i||
    \end{align*}
    
For \(\mathcal{\dot{V}}(\boldsymbol{\bar{\omega}})<0\), we need \(\mathcal{V}(\boldsymbol{\bar{\omega}})>\gamma_1(||\mathbf{\bar{u}}_2||_{\infty})+\gamma_2(||\mathbf{\bar{c}}_i||_{\infty})+\gamma_3(||\mathbf{x}_2||_{\infty})+\sqrt{m_i}\), where \(\gamma_1(s)=\frac{2||\sqrt{\mathbf{\bar{M}}_i}\mathbf{\bar{B}}_{2i}||}{\lambda}s\), \(\gamma_2(s)=\frac{2||\sqrt{\mathbf{\bar{M}}_i}||}{\lambda}s\) and \(\gamma_3(s)=\frac{2||\sqrt{\mathbf{\bar{M}}_i}\mathbf{\bar{B}}_{1i}||}{\lambda}s\) are class-$\mathcal{K}$ functions. 

For the case \(i\in\mathcal{I}_0\), the above results will reduce to
    \begin{align*}
        \mathcal{\dot{V}}(\boldsymbol{\omega})\leq& -\frac{\lambda}{2\kappa}\mathcal{V}(\boldsymbol{\omega})+\frac{1}{\kappa}||\sqrt{\mathbf{M}_i}\mathbf{B}'_{1i}||||\mathbf{x}_2||\\
        &+\frac{1}{\kappa}||\sqrt{\mathbf{M}_i}\mathbf{B}'_{2i}||||\mathbf{\bar{u}}_2||+\frac{1}{\kappa}||\sqrt{\mathbf{M}_i}||||\mathbf{c}'_i||
    \end{align*}
    For \(\mathcal{\dot{V}}(\boldsymbol{\omega})<0\), we have \(\mathcal{V}(\boldsymbol{\omega})>\gamma_1(||\mathbf{\bar{u}}_2||_{\infty})+\gamma_2(||\mathbf{c}'_i||_{\infty})+\gamma_3(||\mathbf{x}_2||_{\infty})\), where \(\gamma_1(s)=\frac{2||\sqrt{\mathbf{M}_i}\mathbf{B}'_{2i}||}{\lambda}s\), \(\gamma_2(s)=\frac{2||\sqrt{\mathbf{M}_i}||}{\lambda}s\) and \(\gamma_3(s)=\frac{2||\sqrt{\mathbf{M}_i}\mathbf{B}'_{1i}||}{\lambda}s\) are class-$\mathcal{K}$ functions. 
    
    Therefore, we know that for \(i\in\mathcal{I}_0\), the set 
    \begin{equation}
        \mathcal{R}_0=\{\boldsymbol{\omega}\in\boldsymbol{\Omega}_i|\ \mathcal{V}(\boldsymbol{\omega})\leq b_0\}
    \end{equation}
and for \(i\in\mathcal{I}_1\), the set 
    \begin{equation}
        \mathcal{R}_1=\{\boldsymbol{\bar{\omega}}\in\boldsymbol{\bar{\Omega}}_i|\ \mathcal{V}(\boldsymbol{\bar{\omega}})\leq b_1\}
    \end{equation}
are both forward invariant \cite{khalil2002nonlinear}. Thus, we know that \(\mathcal{R}_0\) and \(\mathcal{R}_1\) satisfy robust approximate simulation relation with precision \(\kappa b_0\) and \(\kappa b_1\), respectively. Furthermore, from (\ref{error_bound}), we also know that \(\frac{1}{\kappa}||\mathbf{y}_1-\mathbf{y}_2||\leq\mathcal{V}(\boldsymbol{\omega})\) for \(i\in\mathcal{I}_0\) and \(\frac{1}{\kappa}||\mathbf{y}_1-\mathbf{y}_2||\leq\mathcal{V}(\boldsymbol{\bar{\omega}})\) for \(i\in\mathcal{I}_1\). Thus, we know that \(\kappa\mathcal{V}(\boldsymbol{\omega})\) and \(\kappa\mathcal{V}(\boldsymbol{\bar{\omega}})\) can bound the output errors and the robust approximate simulation relation can be satisfied when \(\mathcal{V}(\boldsymbol{\omega})>b_0\) and \(\mathcal{V}(\boldsymbol{\bar{\omega}})>b_1\). Thus, all the cases in (\ref{eq:error_bound}) now have been proven and this completes the proof.
\end{proof}

Note that the error bound in \eqref{eq:error_bound} is dependent on $||\mathbf{x}_2||_{\infty}$. However, it can be computed since $\mathbf{x}_2$ is always available.


\subsection{PWA Abstraction for a PWA System Under Disturbances}

In this part, we consider the case where the abstraction is in form \((\ref{eq:PWA_abstraction})\), where we are free to choose the partitions as well as the system matrices on each partition. Compared to the previous part, the main difference in this part lies in the computation of the cell boundings of the joint system since the partitions of the abstract PWA system should be involved in the computation of the joint \(\mathbf{\bar{E}}_{ij}\) as \(\boldsymbol{\bar{\omega}}\in \mathbf{\bar{\Omega}}_{ij}\), where
    \begin{equation}\label{eq:partition_ij}
        \begin{split}
        \mathbf{\bar{\Omega}}_{ij}:&=\Big\{
        \begin{bmatrix}
             \mathbf{E}_i \\
             \mathbf{E}_{cj} 
        \end{bmatrix}(\mathbf{x}_1-\mathbf{P}_i\mathbf{x}_2)+\begin{bmatrix}
             \mathbf{E}_i\mathbf{P}_i \\
             \mathbf{E}_{cj}\mathbf{P}_i
        \end{bmatrix}\mathbf{x}_2\geq\begin{bmatrix}
             \mathbf{f}_i \\
             \mathbf{f}_{cj} 
        \end{bmatrix}\Big\}\\
        &=\Big\{\begin{bmatrix}
            \mathbf{E}_i & \mathbf{E}_i\mathbf{P}_i \\
            \mathbf{E}_{cj}  & \mathbf{E}_{cj}\mathbf{P}_i
        \end{bmatrix}\boldsymbol{\omega}\geq
        \begin{bmatrix}
             \mathbf{f}_i \\
             \mathbf{f}_{cj} 
        \end{bmatrix}\Big\}\\
        &=\{\mathbf{\bar{E}}_{ij}\boldsymbol{\bar{\omega}}\geq\mathbf{0}\}
        \end{split}
    \end{equation}
where \(i\in\mathcal{I}\), \(j\in\mathcal{I}_{a}\) and the matrices \(\mathbf{E}_{cj}\in\mathbb{R}^{l\times n}\) and \(\mathbf{f}_{cj}\in\mathbb{R}^l\) construct the desired partitions of \(\mathbf{x}_2\) in the state space of \(\mathbf{x}_1\) (i.e., \(\mathbf{x}_1\in \mathcal{X}_1^{i}:=\{\mathbf{x}_1\in\mathbb{R}^n|\ \mathbf{E}_i\mathbf{x}_1\geq\mathbf{f}_i\}\),  \(i\in\mathcal{I}\)). In particular, the desired partitions of \(\mathbf{x}_2\) in the state space of \(\mathbf{x}_1\) can be written in closed form as \(\mathbf{E}_{cj}\mathbf{x}_1\geq\mathbf{f}_{cj}\), i.e., \(\mathbf{E}_{cj}\mathbf{\tilde{x}}+\mathbf{E}_{cj}\mathbf{P}_i\mathbf{x}_2\geq\mathbf{f}_{cj}\), for \(j\in\mathcal{I}_a\).

In fact, the partition of \(\mathbf{x}_2\) can be viewed as \(\mathbf{E}_{aj}\mathbf{x}_2\geq\mathbf{f}_{aj}\) as in (\ref{eq:PWA_abstraction}), where \(\mathbf{E}_{aj}=\mathbf{E}_{cj}\mathbf{P}_i\) and \(\mathbf{f}_{aj}=\mathbf{f}_{cj}-\mathbf{E}_{cj}\mathbf{\tilde{x}}\). It is not surprising that the partition of \(\mathbf{x}_2\) varies with \(\mathbf{\tilde{x}}\) because we know that there must exist some linear transformation \(\mathbf{\Pi}\in\mathbb{R}^{m\times n}\) between two different spaces, i.e. \(\mathbf{x}_2=\mathbf{\Pi}\mathbf{x}_1\) from Lemma 1 of \cite{Sadraddini2019SBPTAOC}. However, this relation cannot hold for our case where the disturbances exist in the state space of \(\mathbf{x}_1\) or it may hold with very restrictive conditions (see proposition 4.1 of \cite{RAKOVIC2007SIORC}). The relations \(\mathbf{E}_{aj}=\mathbf{E}_{cj}\mathbf{P}_i\) and \(\mathbf{f}_{aj}=\mathbf{f}_{cj}-\mathbf{E}_{cj}\mathbf{\tilde{x}}\) can be viewed as an online estimate to reshape the partitions of \(\mathbf{x}_2\) in our case. 

\begin{remark}
    It seems unreasonable to express the partitions of \(\mathbf{x}_2\) in the state space of \(\mathbf{x}_1\) as \(\mathbf{E}_{cj}\mathbf{x}_1\geq\mathbf{f}_{cj}\) for \(j\in\mathcal{I}_a\) because the combinations of the successive partitions of \(\mathbf{x}_1\) may not always be convex and thus the partitions may not be written in this closed form. However, we can always convexify the combinations of the successive partitions of \(\mathbf{x}_1\) by manually adding more edges to the original partitions. In this case, we can describe the joint partitions in terms of the equivalent partitions of \(\mathbf{x}_1\). 
\end{remark}

\begin{remark}
    In (\ref{eq:partition_ij}), we use the joint partitions of the joint system because the partitions of \(\mathbf{x}_2\) are hard to determine. However, we know that the partitions of \(\mathbf{x}_1\) can be predefined and thus the desired partitions of \(\mathbf{x}_2\) in the state space of \(\mathbf{x}_1\) can also be determined. In order to do so, we first use the conditions in (\ref{approximate_relation_i}) to check the approximate simulation relations between the concrete system (\ref{eq:PWA_concrete}) and the abstraction (\ref{eq:PWA_abstraction}). Once the relations can be found, the corresponding successive partitions of \(\mathbf{x}_1\) can be combined together. Then, we can analyze these combinations in two cases. For the case that the combinations of the successive partitions of \(\mathbf{x}_1\) are still convex, the combined partitions of \(\mathbf{x}_1\) can be used to describe the desired partitions of \(\mathbf{x}_2\), i.e., \(\mathbf{E}_{cj}\mathbf{\tilde{x}}+\mathbf{E}_{cj}\mathbf{P}_i\mathbf{x}_2\geq\mathbf{f}_{cj}\), for \(j\in\mathcal{I}_a\). For the case that the combinations of the successive partitions of \(\mathbf{x}_1\) are non-convex, we can first convexify the combinations of the successive partitions by adding edges and then follow the same processes as the previous case. Designing systematic convexification approaches is a subject of future research. 
\end{remark}

\paragraph*{\textbf{Interface}} 
Considering the abstraction being in the form as (\ref{eq:PWA_abstraction}), an associated interface can be designed as
\begin{equation}\label{eq:In_ij}
    \mathbf{u}_v(\mathbf{\tilde{x}},\mathbf{x}_2,\mathbf{\bar{u}}_2)=\mathbf{R}_{ij}\mathbf{\bar{u}}_2+(\mathbf{Q}_i+\mathbf{R}_{ij}\mathbf{L}_j)\mathbf{x}_2+\mathbf{K}_i\mathbf{\tilde{x}}
\end{equation}
for the pairs \((i,j)\), \(i\in\mathcal{I}\) and \(j\in \mathcal{I}_a\), where \(\mathbf{R}_{ij}\in\mathbb{R}^{p\times q}\) is an arbitrary matrix and \(\mathbf{K}_{i}\in\mathbb{R}^{p\times n}\) is a tunable parameter matrix. Then, joint system under the interface (\ref{eq:In_ij}) can be written in the following form
\begin{equation}\label{eq:entire_under_In_ij}
        \left \{\begin{array}{ll}
        \boldsymbol{\dot{\mathbf{\bar{\omega}}}}=\mathbf{\bar{A}}_{ij}\boldsymbol{\bar{\omega}}+\mathbf{\bar{B}}_{1ij}\mathbf{x}_2+\mathbf{\bar{B}}_{2ij}\mathbf{\bar{u}}_2+\mathbf{\bar{c}}_i &  \\
        \mathbf{e}=\mathbf{\bar{C}}_{ij}\boldsymbol{\bar{\omega}} &    
        \end{array}\right.
    \end{equation}
for \(i\in \mathcal{I}\), \(j\in\mathcal{I}_a\), where the state \(\boldsymbol{\bar{\omega}}=\begin{bmatrix}
    \boldsymbol{\omega}^T & 1 \\
\end{bmatrix}^T\), \(\mathbf{\bar{A}}_{ij}=\begin{bmatrix}
    \mathbf{A}'_{ij} & \mathbf{0} \\
    \mathbf{0} & 0
\end{bmatrix}\), \(\mathbf{A}'_{ij}=\begin{bmatrix}
    \mathbf{A}_i+\mathbf{B}_i\mathbf{K}_i & \mathbf{0} \\
    \mathbf{0} & \mathbf{F}_j+\mathbf{G}_j\mathbf{L}_j
\end{bmatrix}\), \(\mathbf{\bar{B}}_{1ij}=\begin{bmatrix}
    \mathbf{B}'_{1ij} \\
    \mathbf{0}
\end{bmatrix}\), \(\mathbf{B}'_{1ij}=\begin{bmatrix}
    (\mathbf{B}_i\mathbf{R}_{ij}-\mathbf{P}_i\mathbf{G}_j)\mathbf{L}_j \\
    \mathbf{0}
\end{bmatrix}\), \(\mathbf{\bar{B}}_{2ij}=\begin{bmatrix}
    \mathbf{B}'_{2ij} \\
    \mathbf{0}
\end{bmatrix}\), \(\mathbf{B}'_{2ij}=\begin{bmatrix}
    \mathbf{B}_i\mathbf{R}_{ij}-\mathbf{P}_i\mathbf{G}_j \\
    \mathbf{G}_j
\end{bmatrix}\), \(\mathbf{\bar{c}}_i=\begin{bmatrix}
    \mathbf{c'}_i^T & 0 
\end{bmatrix}^T\), \(\mathbf{c}'_{i}=\begin{bmatrix}
    \mathbf{c}_{i} \\
    \mathbf{0}
\end{bmatrix}\), \(\mathbf{\bar{C}}_{ij}=\begin{bmatrix}
    \mathbf{C}'_{ij} & 0 
\end{bmatrix}\), \(\mathbf{C}'_{ij}=\begin{bmatrix}
    \mathbf{C}_{ij} & \mathbf{0}
\end{bmatrix}\) and the joint partition is defined as (\ref{eq:partition_ij}). 

\paragraph*{\textbf{Simulation Function}} 
Similar as the linear abstraction case in previous part, to guarantee a formal error bound, the Lyapunov-like
simulation function is presented as
\begin{equation}\label{eq:SF_ij_0}
    \mathcal{V}(\boldsymbol{\omega}):=\frac{1}{\kappa}\sqrt{\boldsymbol{\omega}^T\mathbf{M}_{ij}\boldsymbol{\omega}},\ \mbox{for}\ \boldsymbol{\omega}\in \mathbf{\Omega}_{ij},\ i\in \mathcal{I}_0, j\in \mathcal{I}_{a0}
\end{equation}
\begin{equation}\label{eq:SF_ij_1}
    \mathcal{V}(\boldsymbol{\bar{\omega}}):=\frac{1}{\kappa}\sqrt{\boldsymbol{\bar{\omega}}^T\mathbf{\bar{M}}_{ij}\boldsymbol{\bar{\omega}}},\ \mbox{for}\ \boldsymbol{\bar{\omega}}\in \mathbf{\bar{\Omega}}_{ij},\ i\in \mathcal{I}_1, j\in \mathcal{I}_{a1}
\end{equation}
where \(\kappa>0\) is some adjustable parameter, with each \(\mathbf{\bar{M}}_{ij}=\mathbf{\bar{J}}_{ij}^T\mathbf{T}\mathbf{\bar{J}}_{ij}=\begin{bmatrix}
    \mathbf{M}_{ij} & \mathbf{0} \\
    \mathbf{0} & m_{ij}
\end{bmatrix}\), where \(\mathbf{M}_{ij}\) is a positive definite matrix, \(m_{ij}>0\) and the matrix \(\mathbf{\bar{M}}_{ij}\) should satisfy the following linear matrix inequalities: 
\begin{equation}\label{eq:LMI_Mij}
    \mathbf{\bar{M}}_{ij}-\mathbf{\bar{C}}^T_{ij}\mathbf{\bar{C}}_{ij}\geq 0,\ \mathbf{\bar{M}}_{ij}-\mathbf{\bar{E}}_{ij}^T\mathbf{U}_{ij}\mathbf{\bar{E}}_{ij}>0
\end{equation}
\begin{equation}\label{eq:LMI_stable_ij}
        \mathbf{\bar{A}}_{ij}^T\mathbf{\bar{M}}_{ij}+\mathbf{\bar{M}}_{ij}\mathbf{\bar{A}}_{ij}+\mathbf{\bar{E}}_{ij}^T\mathbf{W}_{ij}\mathbf{\bar{E}}_{ij}+\boldsymbol{\bar{\lambda}}\mathbf{\bar{M}}_{ij}\leq 0
\end{equation}
for the pairs \((i,j)\), \(i\in \mathcal{I}\) and \(j\in \mathcal{I}_a\), where \(\boldsymbol{\bar{\lambda}}=\begin{bmatrix}
    \lambda\mathbf{I} & \mathbf{0} \\
    \mathbf{0} & 0
\end{bmatrix}\), for \(\lambda>0\) free to be selected, \(\mathbf{\bar{E}}_{ij}\) is the cell bounding in (\ref{eq:partition_ij}), \(\mathbf{U}_{ij}\) and \(\mathbf{W}_{ij}\) are some symmetric free parameter matrices with nonnegative entries.

With the joint partitions in (\ref{eq:partition_ij}),
we now provide a generalized version of Theorem 1 in the following theorem.\\

\begin{theorem} \label{theorem_2}
    \ For a given tuple of the concrete system  \((\mathbf{A}_i,\ \mathbf{B}_i,\ \mathbf{C}_i)\), \(i\in \mathcal{I}\), assume there exists a matrix \(\mathbf{P}_i\in\mathbb{R}^{n\times m}\) and a matrix \(\mathbf{Q}_i\in\mathbb{R}^{p\times m}\) such that the linear matrix equations in (\ref{approximate_relation_ij}) hold for a pair of \((\mathbf{F}_j,\ \mathbf{H}_j)\) of the abstraction, \(j\in\mathcal{I}_a\). Then, for the associated interface given by (\ref{eq:In_ij}), there exists a robust approximate simulation relation of \(\Sigma\) by \(\Sigma''\), of the precision \(\delta\) as
    \begin{equation}\label{eq:error_bound_ij}
        \delta:= \left \{\begin{array}{ll}
        \ \kappa\mathcal{V}(\boldsymbol{\omega}), &  \ \ \mathcal{V}(\boldsymbol{\omega})\geq b_0,\  i\in\mathcal{I}_0,\ j\in \mathcal{I}_{a0}\\\
        \kappa b_0, & \ \ \mathcal{V}(\boldsymbol{\omega})< b_0,\ i\in\mathcal{I}_0,\ j\in \mathcal{I}_{a0}\\\
        \kappa\mathcal{V}(\boldsymbol{\bar{\omega}}), &  \ \ \mathcal{V}(\boldsymbol{\bar{\omega}})\geq b_1,\ i\in\mathcal{I}_1,\ j\in \mathcal{I}_{a1}\\\
        \kappa b_1, &     \ \ \mathcal{V}(\boldsymbol{\bar{\omega}})< b_1,\ i\in\mathcal{I}_1,\ j\in \mathcal{I}_{a1}
        \end{array}\right.
    \end{equation}
where \(b_0=\gamma_1(||\mathbf{\bar{u}}_2||_{\infty})+\gamma_2(||\mathbf{c}'_{i}||_{\infty})+\gamma_3(||\mathbf{x}_2||_{\infty})\) and \(b_1=\gamma_1(||\mathbf{\bar{u}}_2||_{\infty})+\gamma_2(||\mathbf{\bar{c}}_{i}||_{\infty})+\gamma_3(||\mathbf{x}_2||_{\infty})+\sqrt{m_{ij}}\), with \(\gamma_1(\cdot)\), \(\gamma_2(\cdot)\) and \(\gamma_3(\cdot)\) being some class-$\mathcal{K}$ functions, and \(\mathcal{V}(\boldsymbol{\omega})\) and \(\mathcal{V}(\boldsymbol{\bar{\omega}})\) are Lyapunov-like simulation functions as in (\ref{eq:SF_ij_0}) and (\ref{eq:SF_ij_1}), respectively.
\end{theorem}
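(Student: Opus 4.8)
The plan is to replay the argument of Theorem~\ref{theorem_1} cell-by-cell on each joint partition $\mathbf{\bar{\Omega}}_{ij}$ defined in (\ref{eq:partition_ij}), since the closed-loop dynamics (\ref{eq:entire_under_In_ij}) for each pair $(i,j)$ have exactly the same structural form as (\ref{eq:entire_under_In}), with every singly-indexed matrix replaced by its doubly-indexed counterpart. First I would establish the output-error bound: the first inequality in (\ref{eq:LMI_Mij}), namely $\mathbf{\bar{M}}_{ij}-\mathbf{\bar{C}}_{ij}^T\mathbf{\bar{C}}_{ij}\geq 0$, gives $\boldsymbol{\bar{\omega}}^T\mathbf{\bar{M}}_{ij}\boldsymbol{\bar{\omega}}\geq\|\mathbf{\bar{C}}_{ij}\boldsymbol{\bar{\omega}}\|^2=\|\mathbf{e}\|^2$, so that $\mathcal{V}(\boldsymbol{\bar{\omega}})\geq\tfrac{1}{\kappa}\|\mathbf{e}\|$ on every cell with $i\in\mathcal{I}_1,\ j\in\mathcal{I}_{a1}$; the reduced condition on $\mathbf{M}_{ij}$ yields the same for $\mathcal{V}(\boldsymbol{\omega})$ when $i\in\mathcal{I}_0,\ j\in\mathcal{I}_{a0}$. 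The matching equations (\ref{approximate_relation_ij}) assumed for the pair $(\mathbf{F}_j,\mathbf{H}_j)$ are precisely what make the block structure of $\mathbf{\bar{A}}_{ij},\mathbf{\bar{B}}_{1ij},\mathbf{\bar{B}}_{2ij}$ in (\ref{eq:entire_under_In_ij}) well-posed, so they enter exactly as in the linear case.

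Next I would differentiate (\ref{eq:SF_ij_1}) along (\ref{eq:entire_under_In_ij}). The quadratic term $\boldsymbol{\bar{\omega}}^T\mathbf{\bar{M}}_{ij}\mathbf{\bar{A}}_{ij}\boldsymbol{\bar{\omega}}$ is controlled by the stability LMI (\ref{eq:LMI_stable_ij}): since $\mathbf{\bar{M}}_{ij}\mathbf{\bar{A}}_{ij}\leq-\tfrac12\mathbf{\bar{E}}_{ij}^T\mathbf{W}_{ij}\mathbf{\bar{E}}_{ij}-\tfrac12\boldsymbol{\bar{\lambda}}\mathbf{\bar{M}}_{ij}$, and since $\mathbf{\bar{E}}_{ij}\boldsymbol{\bar{\omega}}\geq\mathbf{0}$ with $\mathbf{W}_{ij}$ entrywise nonnegative, the $\mathbf{W}_{ij}$ contribution is discarded and only $-\tfrac{\lambda}{2}\boldsymbol{\bar{\omega}}^T\boldsymbol{\bar{\lambda}}\mathbf{\bar{M}}_{ij}\boldsymbol{\bar{\omega}}$ remains. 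The three cross terms in $\mathbf{x}_2,\mathbf{\bar{u}}_2,\mathbf{\bar{c}}_i$ are bounded by Cauchy--Schwarz as $\tfrac{1}{\kappa}\|\sqrt{\mathbf{\bar{M}}_{ij}}\mathbf{\bar{B}}_{1ij}\|\,\|\mathbf{x}_2\|$ and analogously for the other two. Dividing by $\kappa\sqrt{\boldsymbol{\bar{\omega}}^T\mathbf{\bar{M}}_{ij}\boldsymbol{\bar{\omega}}}$ and using the block form $\mathbf{\bar{M}}_{ij}=\mathrm{diag}(\mathbf{M}_{ij},m_{ij})$ together with $\boldsymbol{\bar{\lambda}}$ extracts the $\tfrac{\lambda}{2\kappa^2}m_{ij}/\mathcal{V}(\boldsymbol{\bar{\omega}})$ correction; invoking the lower bound $\mathcal{V}(\boldsymbol{\bar{\omega}})\geq\tfrac{1}{\kappa}\sqrt{m_{ij}}$ then reproduces the inequality of Theorem~\ref{theorem_1} verbatim with $m_i\mapsto m_{ij}$ and gains $\gamma_1(s)=\tfrac{2\|\sqrt{\mathbf{\bar{M}}_{ij}}\mathbf{\bar{B}}_{2ij}\|}{\lambda}s$, $\gamma_2(s)=\tfrac{2\|\sqrt{\mathbf{\bar{M}}_{ij}}\|}{\lambda}s$, $\gamma_3(s)=\tfrac{2\|\sqrt{\mathbf{\bar{M}}_{ij}}\mathbf{\bar{B}}_{1ij}\|}{\lambda}s$. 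The homogeneous case $i\in\mathcal{I}_0,\ j\in\mathcal{I}_{a0}$ reduces identically, dropping the $m_{ij}$ term and replacing hatted quantities by $\mathbf{M}_{ij},\mathbf{B}'_{1ij},\mathbf{B}'_{2ij},\mathbf{c}'_i$, which recovers $b_0$ and $b_1$ as stated.

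From $\dot{\mathcal{V}}<0$ outside the respective sublevel sets I would conclude, by the standard invariant-set argument \cite{khalil2002nonlinear}, that $\mathcal{R}_0=\{\boldsymbol{\omega}\in\mathbf{\Omega}_{ij}\mid\mathcal{V}(\boldsymbol{\omega})\leq b_0\}$ and $\mathcal{R}_1=\{\boldsymbol{\bar{\omega}}\in\mathbf{\bar{\Omega}}_{ij}\mid\mathcal{V}(\boldsymbol{\bar{\omega}})\leq b_1\}$ are forward invariant, and combining this with the output-error bound of the first step assembles the four branches of (\ref{eq:error_bound_ij}).

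The main obstacle, and the only place where the argument genuinely departs from Theorem~\ref{theorem_1}, is the two-index partition structure. Each $\mathbf{\bar{E}}_{ij}$ in (\ref{eq:partition_ij}) now encodes both the concrete cell $\mathbf{E}_i$ and the online-reshaped abstract cell $\mathbf{E}_{cj}\mathbf{P}_i$, so I must verify that (i) the continuity matrix $\mathbf{\bar{J}}_{ij}$ used to form $\mathbf{\bar{M}}_{ij}=\mathbf{\bar{J}}_{ij}^T\mathbf{T}\mathbf{\bar{J}}_{ij}$ renders $\mathcal{V}$ continuous across every shared boundary of adjacent $(i,j)$ cells---including the \emph{new} boundaries induced by switching the abstract mode $j$---so that the per-cell directional-derivative estimates patch into a single global forward-invariance statement; and (ii) the discarding of the $\mathbf{\bar{E}}_{ij}^T\mathbf{W}_{ij}\mathbf{\bar{E}}_{ij}$ term stays valid, which requires the trajectory to remain inside $\mathbf{\bar{\Omega}}_{ij}$ where $\mathbf{\bar{E}}_{ij}\boldsymbol{\bar{\omega}}\geq\mathbf{0}$. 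The convexification of combined successive partitions discussed in the preceding remarks guarantees each $\mathbf{\bar{\Omega}}_{ij}$ is a genuine polyhedron, so these consistency facts hold; once they are in place, the per-cell estimates close exactly as above and the proof concludes.
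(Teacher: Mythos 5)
Your proposal is correct and takes essentially the same approach the paper intends: the paper omits the proof of Theorem~\ref{theorem_2}, stating only that it is ``similar to that of Theorem~\ref{theorem_1},'' and your cell-by-cell replay of that argument with the doubly-indexed quantities $\mathbf{\bar{A}}_{ij}$, $\mathbf{\bar{B}}_{1ij}$, $\mathbf{\bar{B}}_{2ij}$, $\mathbf{\bar{M}}_{ij}$, $\mathbf{\bar{E}}_{ij}$ and $m_{ij}$ is precisely that omitted argument. Your extra verification that the continuity matrices $\mathbf{\bar{J}}_{ij}$ keep $\mathcal{V}$ continuous across the new boundaries induced by abstract-mode switches, and that discarding the $\mathbf{\bar{E}}_{ij}^T\mathbf{W}_{ij}\mathbf{\bar{E}}_{ij}$ term is legitimate only while the trajectory stays in $\mathbf{\bar{\Omega}}_{ij}$, goes slightly beyond what the paper records but is consistent with it.
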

\begin{proof}
    The proof of Theorem 2 is similar to that of Theorem 1, and thus is omitted here.
\end{proof}

\section{Simulation Examples}\label{sec:simulation}

In this section, we use two simulation examples to illustrate the effectiveness of the proposed robust approximate simulation based control method developed for PWA systems.


\subsection{Case 1: PWA by Linear}\label{subsec:case_1}

In this part, we consider an example of a robot tracking a 3-part road section. The robot must navigate the path with different dynamics in each part. The output of both concrete and abstract systems represent the position of the robot in the plane. The control objective is to make the robot track the planned path in \(\mathbf{x}_2\), meanwhile the output errors maintain bounded. 

The concrete system was selected as a triple integrator, but its dynamics is different on each part of the path in terms of the input mapping matrix \(\mathbf{B}_i\), i.e.,
\begin{equation}\label{eq:PWA_concrete_simulation}
    \Sigma:\left \{\begin{array}{ll}
    \mathbf{\dot{x}}_1=\mathbf{A}_i\mathbf{x}_1+\mathbf{B}_i\mathbf{u}_1+\mathbf{c}_i &  \\
    \mathbf{y}_1=\mathbf{C}_i\mathbf{x}_1 &    
    \end{array}\right.
\end{equation}
with the parameters of the system as follows 
\[ \mathbf{A}_i=\begin{bmatrix}
    \mathbf{0}_2 & \mathbf{I}_2 & \mathbf{0}_2 \\
    \mathbf{0}_2 & \mathbf{0}_2 & \mathbf{I}_2 \\
    \mathbf{0}_2 & \mathbf{0}_2 & \mathbf{0}_2
\end{bmatrix},\ \mathbf{B}_1=0.5\mathbf{B}_2=2\mathbf{B}_3=\begin{bmatrix}
    \mathbf{0}_2  \\
    \mathbf{0}_2  \\
    \mathbf{I}_2 
\end{bmatrix},\  \]
\[ \mathbf{C}_i=\begin{bmatrix}
    \mathbf{I}_2 & \mathbf{0}_2 & \mathbf{0}_2 
\end{bmatrix},\ \mathbf{c}_i=(-0.1+0.05sin(t))\mathbf{1}_{6\times 1}\] 
for \(i=1,2,3\), and the partitions are set based on the position of the robot as 
\[ \mathbf{E}'_i=\begin{bmatrix}
    \mathbf{E}'_{i1} & \mathbf{0}_2 & \mathbf{0}_2 \\
    \mathbf{0}_2 & \mathbf{0}_2 & \mathbf{0}_2\\
    \mathbf{0}_2 & \mathbf{0}_2 & \mathbf{0}_2
\end{bmatrix},\ \mathbf{f}'_i=\mathbf{0}
, \mbox{ for  \(i=1,2,3\),}\]
 where
\(\mathbf{E}'_{11}=-\mathbf{E}'_{31}=\begin{bmatrix}
    -1 & 1 \\
    -1 & -1
\end{bmatrix}\) and \(\mathbf{E}'_{21}=\begin{bmatrix}
    -1 & 1 \\
    1 & 1
\end{bmatrix}\). 

The abstract system was selected as a single integrator, i.e.,
\begin{equation}\label{eq:PWA_abstract_simulation}
    \Sigma':\left \{\begin{array}{ll}
    \mathbf{\dot{x}}_2=\mathbf{u}_2 &  \\
    \mathbf{y}_2=\mathbf{x}_2 &    
    \end{array}\right.
\end{equation}
where the input transformation law was designed as \(\mathbf{u}_2=-\mathbf{x}_2+\mathbf{\bar{u}}_2\). The control parameters for \(i=1,2,3\) were selected as 
\[ \mathbf{P}_i=\begin{bmatrix}
    \mathbf{I}_2 & \mathbf{0}_2 & \mathbf{0}_2 
\end{bmatrix}^T,\ \mathbf{Q}_i=\mathbf{0}_2,\ \mathbf{R}_i=\mathbf{B}_i^T[\mathbf{B}_i\mathbf{B}_i^T]^{-1}\mathbf{P}_i\mathbf{G},\ \]
\[\mathbf{K}_1=2\mathbf{K}_2=0.5\mathbf{K}_3=-\begin{bmatrix}
    52\mathbf{I}_2 & 52.3\mathbf{I}_2 & 13\mathbf{I}_2 
\end{bmatrix}\]
The observed simulation results under these system and control parameters are shown in Figures 2, 3 and 4.
\begin{figure}[!t]
    \centering
    \includegraphics[width=0.8\linewidth]{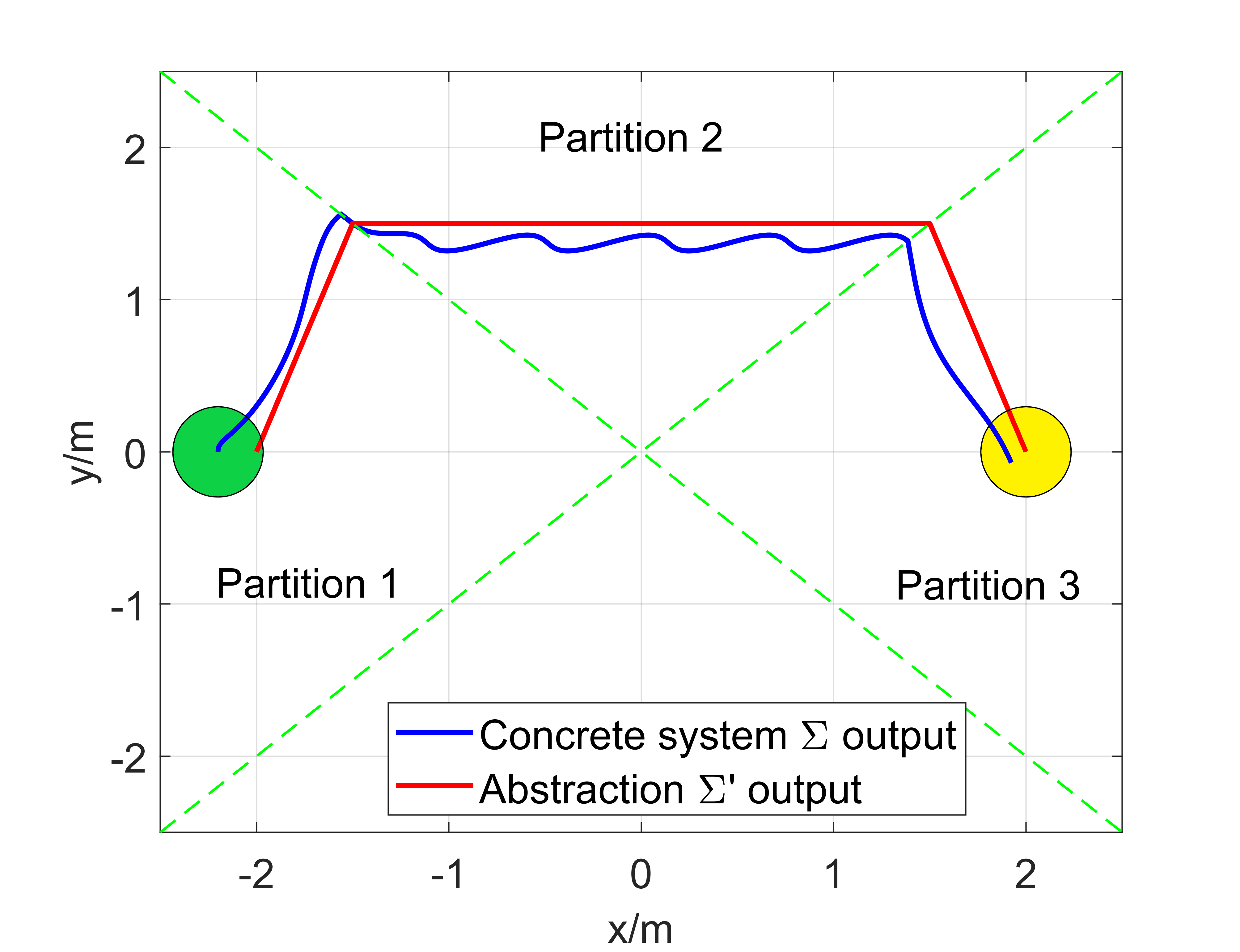}
    \caption{The output of the concrete system \(\Sigma\) and the abstraction \(\Sigma'\), starting from the green circular region and ending at the yellow circular target region. The boundaries of the partitions are shown in green dashed lines.}
    \label{fig:plots_output_tracking1}
\end{figure}
\begin{figure}[!t]
    \centering
    \includegraphics[width=0.8\linewidth]{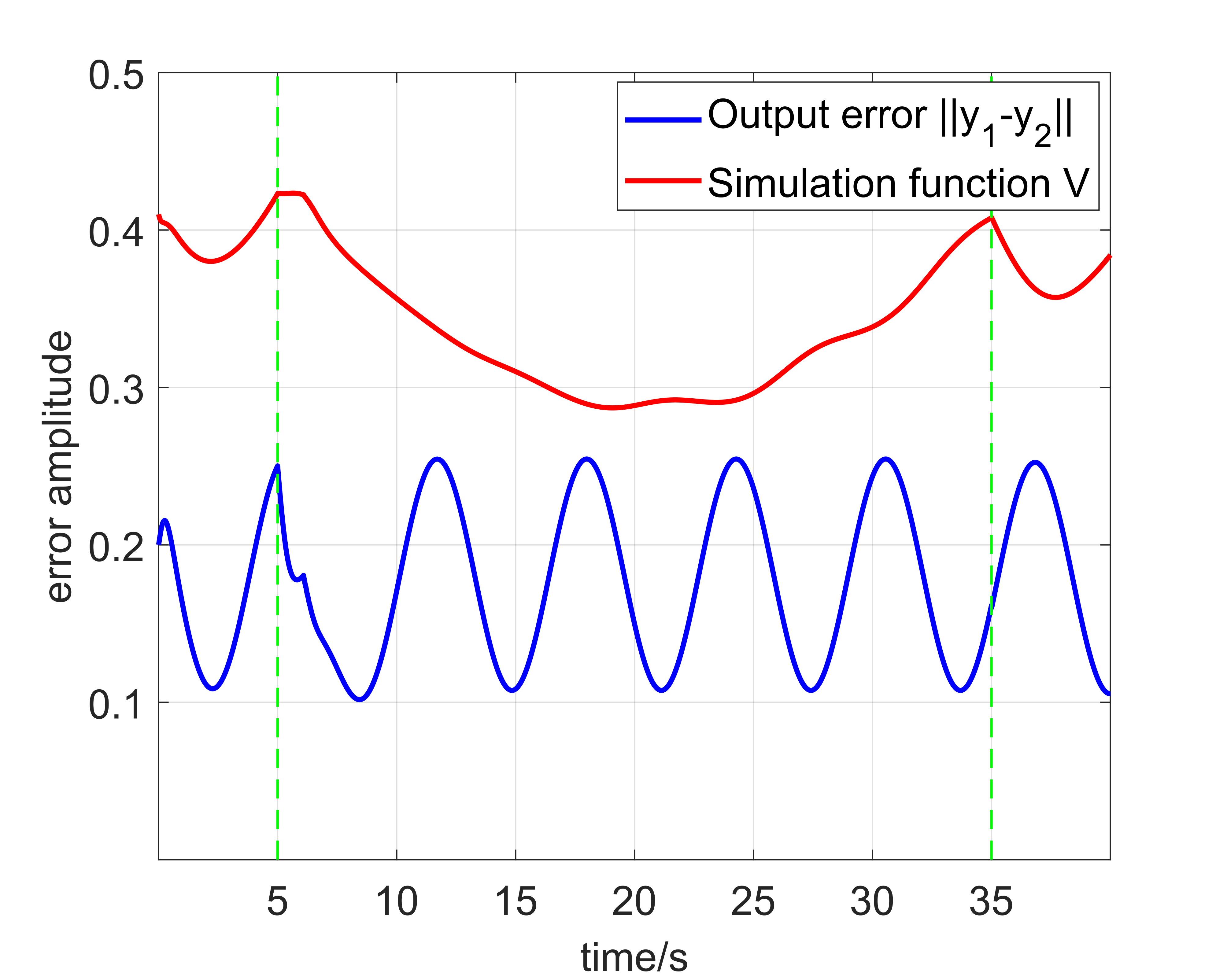}
    \caption{The simulated output error \(||\mathbf{y}_1-\mathbf{y}_2||\) between the concrete system \(\Sigma\) and the abstraction \(\Sigma'\) and the value of simulation function using (\ref{eq:SF_0}) with $\kappa=8$. Boundary crossing soon after green dashed lines.}
    \label{fig:plots_simulation_function1}
\end{figure}
\begin{figure}[!t]
    \centering
    \includegraphics[width=0.8\linewidth]{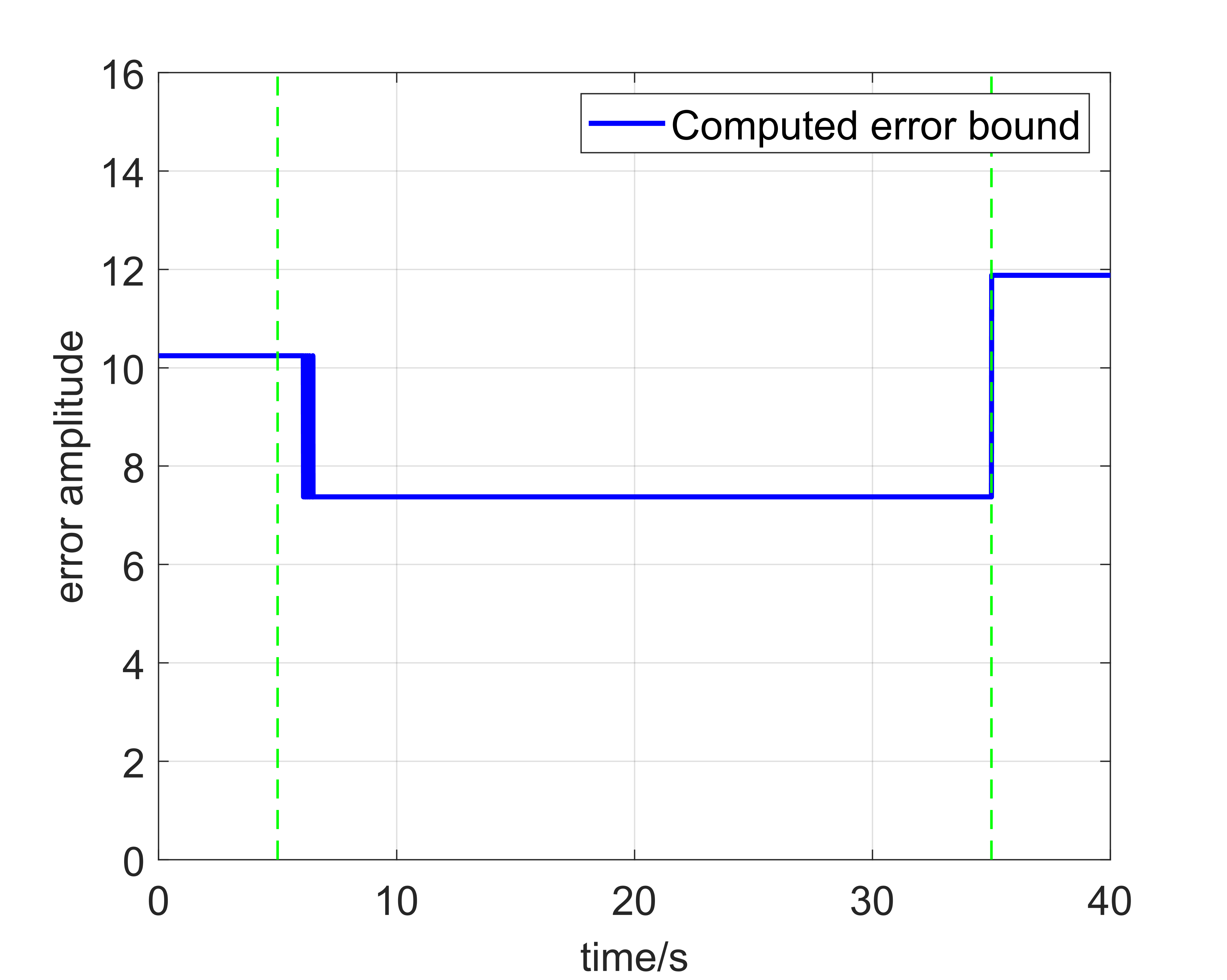}
    \caption{The computed error bound using \(b_0\). Boundary crossing soon after green dashed lines.}
    \label{fig:plots_error_bounds1}
\end{figure}

In Figure 2, it can be seen that the output of the concrete system \(\Sigma\) starts from the green colored initial region and successfully tracks the output of the abstraction \(\Sigma'\) and the two trajectories remain considerably close to each other until the robot reaches the yellow colored target region. More precisely, the output tracking error is shown in Figure 3, which is bounded by the simulation function \(\mathcal{V}\) (obtained with \(\kappa=8\)). Thus, the proposed method is effective. In Figure 3, it can also be seen that when \(\Sigma'\) crosses the boundary of the partitions, the value of the simulation function increases. The reason is that \(\mathbf{x}_2\) is not slowly varying and the amplitude of \(||\mathbf{x}_2||\) also increases at the boundary, which will cause the joint state \(\boldsymbol{\omega}\) to increase. In Figure 4, the computed error bound is relatively large with respect to the simulation function and the value of this computed bound also changes around the instants that \(\Sigma'\) and \(\Sigma\) cross a boundary. At the time instant around 7 sec, there is a chattered switching between two different values of the error bounds, which is because the concrete system temporarily moves back and forth across a boundary of two different partitions. From Figure 3 and 4, we can also see that the simulation function remains bounded within the computed error bound.

\subsection{Case 2: PWA by PWA}\label{subsec:case_2}

In this part, we consider the case that a simpler PWA system is the abstraction for the concrete PWA system. Therefore, we consider a more complicated tracking example with 5-parts road section, while the control objective is similar to the Case 1 discussed before. 

The concrete PWA system was chosen to be in the same form as (\ref{eq:PWA_concrete_simulation}), but its parameters were selected as
\[ \mathbf{A}_i=\eta_{i1}\begin{bmatrix}
    \mathbf{I}_2 & \mathbf{I}_2\\
    \mathbf{0}_2 & \mathbf{I}_2
\end{bmatrix},\ \mathbf{B}_i=\eta_{i2}\begin{bmatrix}
    \mathbf{0}_2  \\
    \mathbf{I}_2 
\end{bmatrix},\  \]
\[ \mathbf{C}_i=\begin{bmatrix}
    \mathbf{I}_2 & \mathbf{0}_2
\end{bmatrix},\ \mathbf{c}_i=(-0.1+0.05sin(t))\mathbf{1}_{4\times 1} \]
for \(i=1,2,...,5\). 

The abstraction was selected as a simpler PWA system,
\begin{equation}\label{eq:PWA_abstract_simulation_1}
    \Sigma'':\left \{\begin{array}{ll}
    \mathbf{\dot{x}}_2=\mathbf{F}_j\mathbf{x}_2+\mathbf{G}_j\mathbf{u}_2 &  \\
    \mathbf{y}_2=\mathbf{H}_j\mathbf{x}_2 &    
    \end{array}\right.
\end{equation}
where the parameters were chosen as
\[ \mathbf{F}_j=\eta_{i1}\mathbf{I}_2,\ \mathbf{G}_j=\mathbf{I}_2,\ \mathbf{H}_j=\mathbf{I}_2,\ \]
for \(j=1,2,3\), with \(\eta_{11}=\eta_{21}=1\), \(\eta_{31}=2\), \(\eta_{41}=\eta_{51}=0.5\), \(\eta_{12}=\eta_{22}=1\), \(\eta_{32}=2\), \(\eta_{42}=\eta_{52}=0.5\), and the input transformation law was designed as  \(\mathbf{u}_2=-k_i\mathbf{x}_2+\mathbf{\bar{u}}_2\), where \(k_1=k_2=3\), \(k_3=4\), \(k_4=k_5=2.5\). 

The joint partitions were selected as
\[\mathbf{\bar{E}}_{ij}=\begin{bmatrix}
    \mathbf{E}'_{ij} & \mathbf{0}_2 & \mathbf{f}'_{ij} \\
    \mathbf{0}_2 & \mathbf{0}_2 & \mathbf{0}_{2\times 1}
\end{bmatrix},\ \] 
for \(i=1,...,5\) and \(j=1,2,3\), where \(\mathbf{E}'_{11}=\begin{bmatrix}
    -1 & 0 \\
    0 & 0
\end{bmatrix}\), \(\mathbf{E}'_{21}=\begin{bmatrix}
    1 & 0 \\
    -1 & 0
\end{bmatrix}\), \(\mathbf{E}'_{32}=\begin{bmatrix}
    1 & 0 \\
    -1 & 0
\end{bmatrix}\), \(\mathbf{E}'_{43}=\begin{bmatrix}
    1 & 0 \\
    -1 & 0
\end{bmatrix}\), \(\mathbf{E}'_{53}=\begin{bmatrix}
    1 & 0 \\
    0 & 0
\end{bmatrix}\), \(\mathbf{f}'_{11}=\begin{bmatrix}
     1.5  \\
     0
\end{bmatrix}\), \(\mathbf{f}'_{21}=\begin{bmatrix}
     -1.5  \\
     0.5
\end{bmatrix}\), \(\mathbf{f}'_{32}=\begin{bmatrix}
     -0.5  \\
     -0.5
\end{bmatrix}\), \(\mathbf{f}'_{43}=\begin{bmatrix}
     0.5  \\
     -1.5
\end{bmatrix}\) and \(\mathbf{f}'_{53}=\begin{bmatrix}
     1.5  \\
     0
\end{bmatrix}\). The output of both systems represents the position of the robot in the plane. The control parameters for \(i=1,...,5\) and \(j=1,2,3\) were selected as 
\[ \mathbf{P}_i=\begin{bmatrix}
    \mathbf{I}_2 & \mathbf{0}_2
\end{bmatrix}^T,\ \mathbf{Q}_i=\mathbf{0}_2,\ \mathbf{R}_{ij}=\mathbf{B}^T_i[\mathbf{B}_i\mathbf{B}_i^T]^{-1}\mathbf{P}_i\mathbf{G}_j\]
\[\mathbf{K}_1=\mathbf{K}_2=2\mathbf{K}_3=0.5\mathbf{K}_4=0.5\mathbf{K}_5=-\begin{bmatrix}
    50\mathbf{I}_2 & 10\mathbf{I}_2
\end{bmatrix}\]
Under these system and the control parameters, the observed simulation results are shown in Figures 5, 6 and 7. 

In Figure 5, it can be seen that the output of the concrete system \(\Sigma\) can track the output of the abstraction \(\Sigma''\), starting from the green colored initial region and ending at the yellow colored target region, and in-between, the two trajectories remain close. The output tracking error is shown in Figure 6, which is bounded by the simulation function \(\mathcal{V}\) (obtained with \(\kappa=12\)), which implies that the proposed method is effective. In Figure 6, it can also be seen that the value of simulation function changes when \(\Sigma''\) crosses the boundary of a partition and the reason is similar to that of Case 1. According to Figure 6 and 7, the simulation function is bounded by the computed error bound and the value of this computed error bound also change around the instants where \(\Sigma''\) and \(\Sigma\) cross a boundary.

\begin{figure}[!t]
    \centering
    \includegraphics[width=0.8\linewidth]{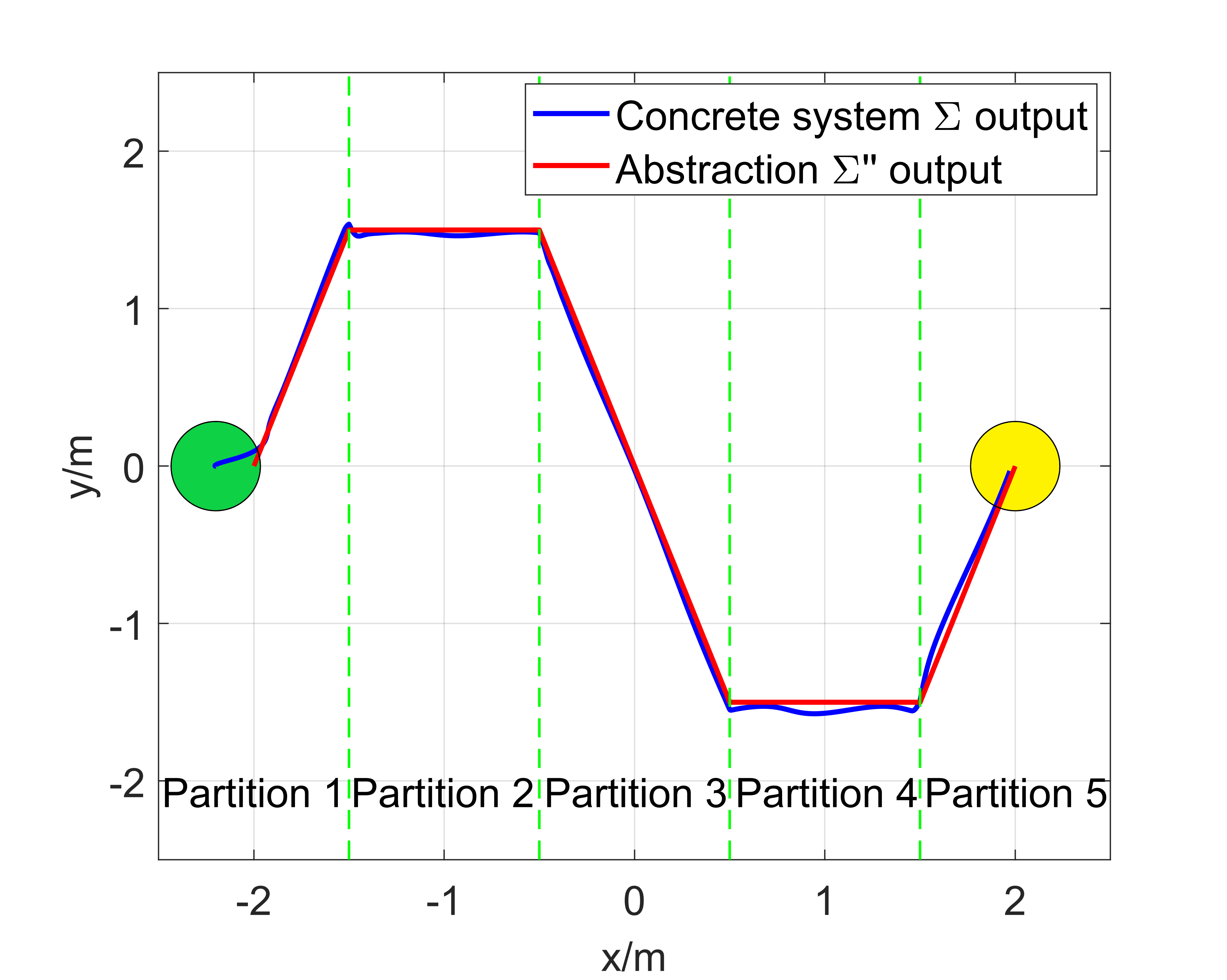}
    \caption{The output of the concrete system \(\Sigma\) and the abstraction \(\Sigma''\), starting from the green circular region and ending at the yellow target region. The boundaries of the partitions are shown in green dashed lines.}
    \label{fig:plots_output_tracking2}
\end{figure}

\begin{figure}[!t]
    \centering
    \includegraphics[width=0.8\linewidth]{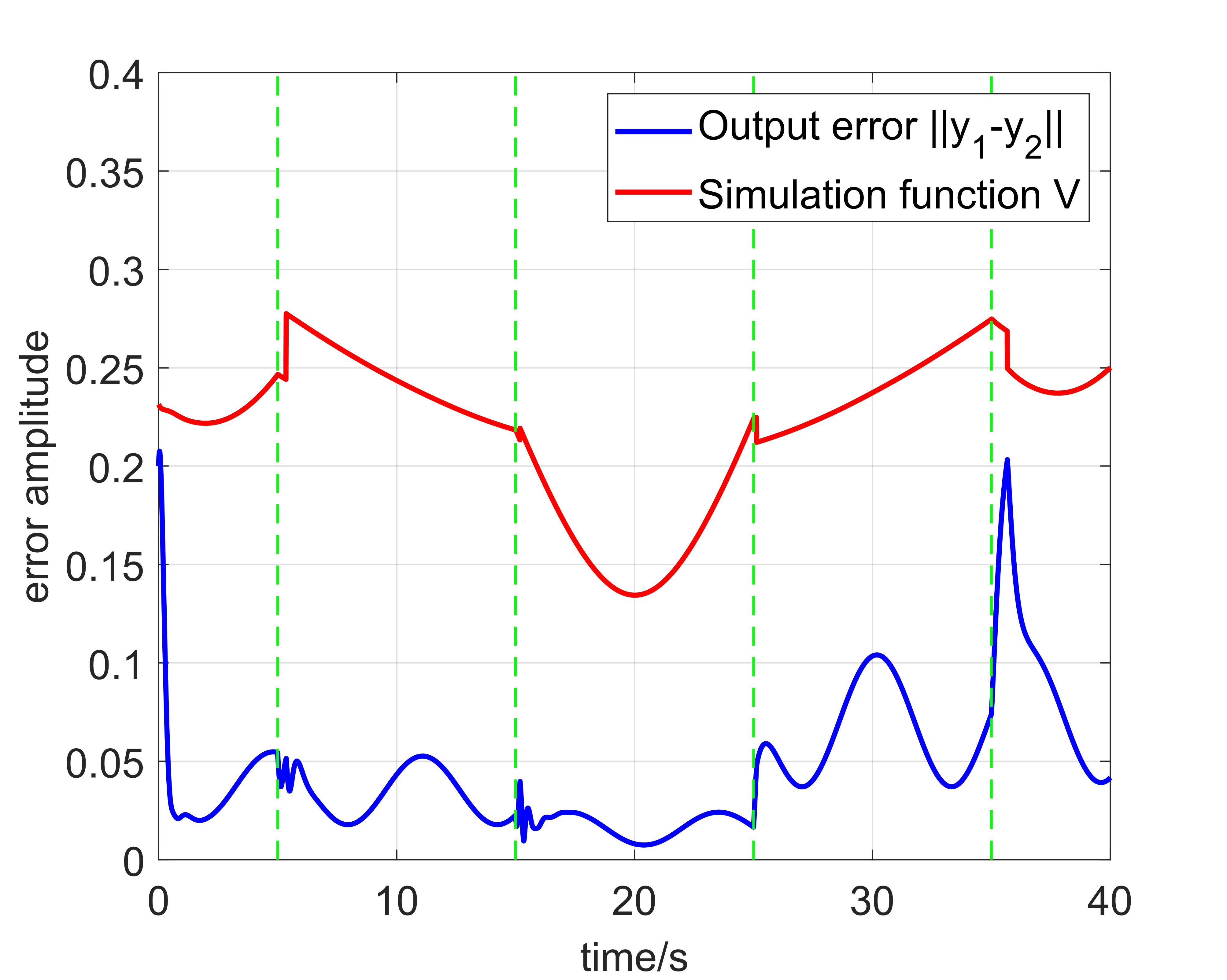}
    \caption{The simulated output error \(||\mathbf{y}_1-\mathbf{y}_2||\) between the concrete system \(\Sigma\) and the abstraction \(\Sigma''\) and the value of simulation function using (\ref{eq:SF_ij_1}) with $\kappa=12$. Boundary crossing soon after green dashed lines.}
    \label{fig:plots_simulation_function2}
\end{figure}

\begin{figure}[!t]
    \centering
    \includegraphics[width=0.8\linewidth]{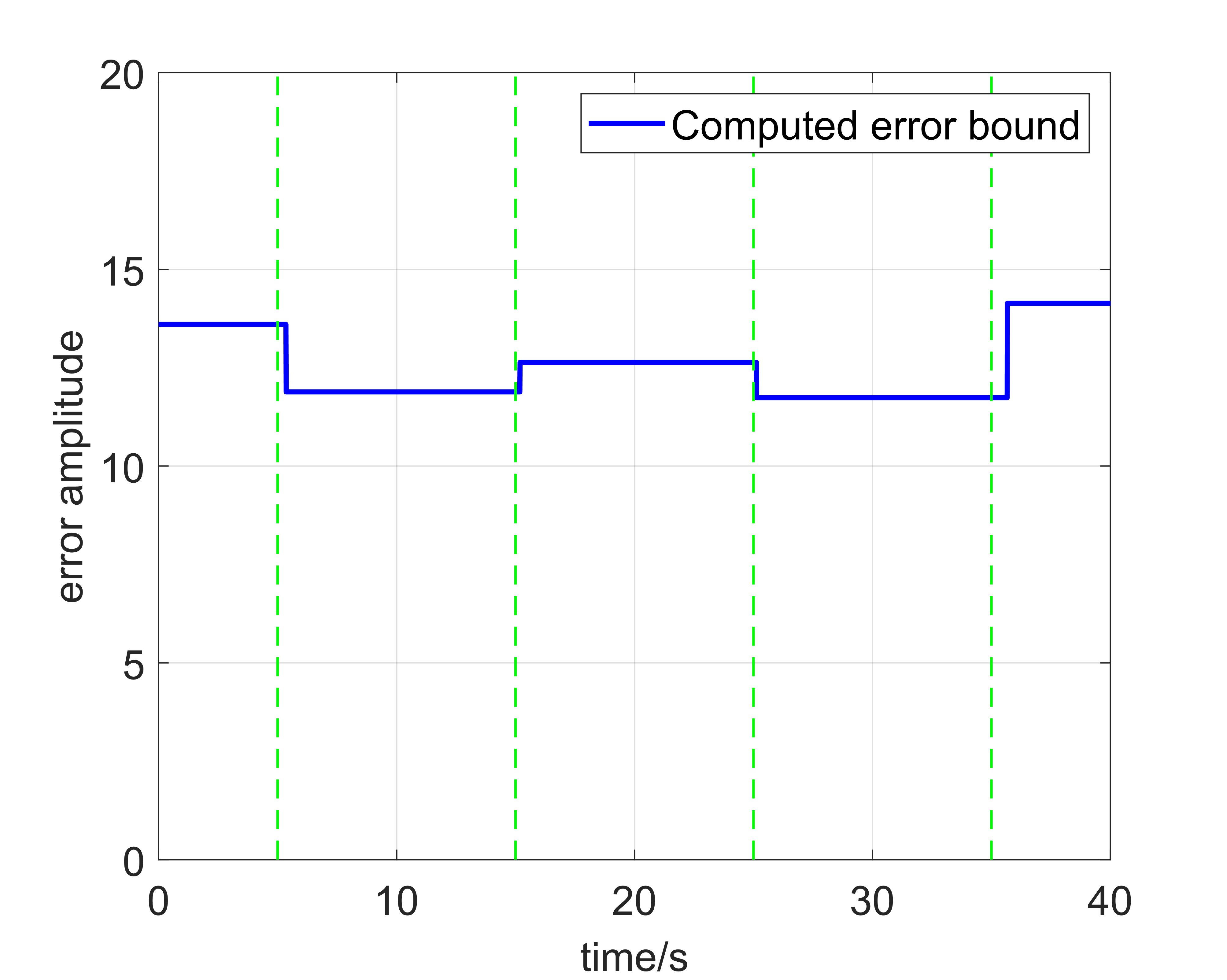}
    \caption{The computed error bound using \(b_1\). Boundary crossing soon after green dashed lines.}
    \label{fig:plots_error_bounds2}
\end{figure}

\section{Conclusion}\label{sec:conclusion}

In this paper, a novel control strategy was presented for the control of PWA systems based on robust approximate simulation framework. First, we designed the interface and the simulation function for a configuration where the concrete system is a known PWA system and the abstraction is a linear system with the system matrices free to choose. Then, the proposed design procedure was generalized for a configuration where the abstraction is a PWA system with the system matrices free to choose under some constraints on its partitions. Finally, we used two simulation examples to  illustrate the effectiveness of the proposed method. Future work aim to improve the established formal error bound such that the proposed approximate simulation based control method can be applied to address more general planning tasks. 

\bibliographystyle{IEEEtran}
\bibliography{references}

\end{document}